\documentclass[onecolumn]{article}

%%%%%%%%%%---Packages---%%%%%%%%%%%%

\usepackage[utf8x]{inputenc} %%%%%% add x
\usepackage{amsmath}
\usepackage{mathrsfs}
\usepackage{amsfonts}
\usepackage{amssymb}
\usepackage{graphicx}
\usepackage{titletoc}
\usepackage{lmodern}
\usepackage[T1]{fontenc}
\usepackage{textcomp}
\usepackage{color}
\usepackage{parskip}
\usepackage{etoolbox}
\usepackage{amsthm}
\usepackage{hyperref}

%%%%%%%%%%%%%%%% ---------------theorems-------------------

\newtheorem{proposition}{Proposition}
\newtheorem{assumption}{Assumption}

\newtheorem{definition}{Definition}

\newtheorem{remark}{Remark}

%%%%%%%%%%%%%%% ------------- new commands------------------

           %% Partial derivatives
\newcommand{\MAP}[3]{{#1}:\mathbb{R}^{#2}\to\mathbb{R}^{#3}}      %% Mappings
\newcommand{\bkd}[1]{\mbox{block}\left\lbrace{#1}\right\rbrace}

\AtBeginEnvironment{array}{\setlength{\arraycolsep}{2pt}}
\AtBeginEnvironment{eqnarray}{\setlength{\arraycolsep}{2pt}}
\DeclareMathOperator{\sech}{sech}

\def\rea{\mathbb{R}}

\def\diag{\mbox{diag}}

\def\rank{\mbox{rank}}

%%%%%%%%% colors

%%%%%%%%%%%

% \def\BibTeX{{\rm B\kern-.05em{\sc i\kern-.025em b}\kern-.08em
%     T\kern-.1667em\lower.7ex\hbox{E}\kern-.125emX}}
% \markboth{\journalname, VOL. XX, NO. XX, XXXX 2017}
% {Author \MakeLowercase{\textit{et al.}}: Preparation of Papers for IEEE TRANSACTIONS and JOURNALS (February 2017)}

\begin{document}

\title{Stabilization of physical systems via saturated controllers with only partial state measurements}
\author{Pablo Borja\thanks{Department of Cognitive Robotics. Faculty of Mechanical, Maritime and Materials Engineering (3mE). Delft University of Technology (TU Delft). Delft, The Netherlands.}, Carmen Chan-Zheng\thanks{Jan C. Willems Center for Systems and Control. Engineering and Technology Institute Groningen. Faculty of Science and Engineering. University of Groningen. Groningen, The Netherlands.}, and Jacquelien M.A. Scherpen$^{\dag}$\\ {\small \href{mailto:l.p.borjarosales@tudelft.nl}{\tt l.p.borjarosales@tudelft.nl}, \href{mailto:c.chan.zheng@rug.nl}{\tt c.chan.zheng@rug.nl}, \href{mailto:j.m.a.scherpen@rug.nl}{\tt j.m.a.scherpen@rug.nl}}}

\maketitle

%%% ----------------------------------------------------------------------

\begin{abstract}
% This work addresses the set-point regulation problem with saturated controllers for a class of nonlinear systems. Towards this end, we adopt a passivity-based approach where, additionally, the resulting control law performs the damping injection without the necessity of measuring the passive output. Examples in three different physical domains illustrate the applicability of the methodology.\\
This paper provides a constructive passivity-based control approach to solve the set-point regulation problem for input-affine continuous nonlinear systems while considering saturation in the inputs. As customarily in passivity-based control, the methodology consists of two steps: energy shaping and damping injection. In terms of applicability, the proposed controllers have two advantages concerning other passivity-based control techniques: (i) the energy shaping is carried out without solving partial differential equations, and (ii) the damping injection is performed without measuring the passive output. The proposed methodology is suitable to control a broad range of physical systems, e.g., mechanical, electrical, and electro-mechanical systems. We illustrate the applicability of the technique by designing controllers for systems in different physical domains, where we validate the analytical results via simulations and experiments.
\end{abstract}
% \begin{keywords}
% Passivity-based control, port-Hamiltonian systems, Brayton-Moser equations, dynamic extension, damping injection.
% \end{keywords}

\textbf{Keywords.} Passivity-based control, port-Hamiltonian systems, Brayton-Moser equations, dynamic extension, damping injection.

\section{Introduction}\label{sec:1}

 The behavior of a physical system is ruled by its energy, the interconnection pattern among its elements, its dissipation, and the interaction with its environment. These components are the main ingredients of passivity-based control (PBC). Hence, this control approach arises as a natural choice to control a wide variety of physical systems while taking into account conservation laws and other physical properties of the system under study see, for instance, \cite{ORTbook,ORTcsm,GEObook,VANJEL}. \\[0.2cm]
%  Furthermore, in some cases, the controllers resulting from PBC techniques are endowed with a physical interpretation that may be helpful during the control design process, and for the stability analysis of the closed-loop system.\\[0.2cm] 
%%%%%%%
Due to its versatility, PBC has proven to be a powerful control approach to solve different problems, such as set-point regulation or trajectory tracking \cite{ORTbook, ORTcsm, VAN}. However, the implementation of these controllers may be hampered by physical limitations such as the operation ranges of the actuators or unavailable state measurements due to the lack of sensors. To address these issues, we propose a PBC approach suitable to stabilize a class of passive systems, where the controller is saturated and does not require full state measurements. These properties can be helpful to protect the actuators of the system, to avoid undesired oscillations, or to deal with the lack of sensors to measure specific elements of the state.\\[0.2cm] 
%%%
The injection of damping into the closed-loop is essential to guarantee that the system converges to the desired configuration. However, the signals involved in this process are not always measurable, e.g., velocities in mechanical systems. In this regard, observers offer a solution to this problem; we refer the reader to \cite{venkatraman2010full} for the port-Hamiltonian (pH) approach and \cite{venkatraman2010speed} for a class of mechanical systems. Nevertheless, the implementation of observers in nonlinear systems can hinder the stability analysis of the closed-loop system. In this work, we avoid the use of observers by proposing a new state vector, the dynamics of which are designed to inject damping into the closed-loop system using only measurable signals. For mechanical systems, a similar method to inject damping while avoiding velocity measurements is adopted in \cite{KELLY, LOROBS} for the Euler-Lagrange (EL) approach and in \cite{dirksz2013tracking, WESBOR} for the pH framework. In this paper, we generalize the results reported in \cite{WESBOR} to passive systems in different physical domains, not necessarily in the pH approach. Some differences between the methodology proposed in this paper and the results reported in \cite{KELLY, LOROBS, dirksz2013tracking} are: 

\begin{itemize}
 \item [(i)] The resulting controllers are saturated.
 \item [(ii)] The use of the open-loop dissipative terms to improve the transient response of the closed-loop system. In particular, in mechanical systems, we exploit the natural damping to modify the damping of the closed-loop system without measuring velocities.
 \item [(iii)] The methodology encompasses, in addition to the EL and pH approaches, other representation of passive systems, such as systems presented by the Brayton-Moser (BM) equations.
\end{itemize}

% 
% (i) the saturated property of the controller, (ii) the use of the natural damping to estimate signals that are different from the passive output, which in mechanical systems means that we can estimate the velocities of the underactuated coordinates, and (iii) the methodology encompasses, in addition to the EL and pH approaches, other representation of passive systems, such as systems presented by the Brayton-Moser (BM) equations.\\[0.2cm]
%%%%
Some examples of PBC techniques that deal with the saturation problem for mechanical systems are \cite{colbaugh1997global, alvarez2003semiglobal, escobar1999output, JIANG, LORSAT, WESBOR}. Our approach differs from the mentioned references in the following aspects: 
\begin{itemize}
 \item [(i)] We propose a PBC approach that is suitable to stabilize a broad class of passive systems. This contrasts with \cite{escobar1999output, JIANG, WESBOR}, where the controllers are designed for specific systems.
 \item [(ii)] The proposed controllers inject damping without measuring the passive output. In particular, for mechanical systems, this implies that the control law does not require velocity measurements, which differs from the results reported in \cite{colbaugh1997global}.
 \item [(iii)] In contrast to \cite{LORSAT}, we consider underactuated systems.
%  \item [(iv)] \lpbr{@@Revisar esta@@} The proposed control strategy permits injecting damping into the underactuated coordinates. 
 \item [(iv)] The methodology proposed in this paper does not require any change of coordinates during the control design.
\end{itemize}

% 
% (i)In contrast to \cite{escobar1999output, JIANG, WESBOR}, we propose a PBC approach that is suitable to stabilize a broad class of passive systems.  We consider the case of non-measurable passive outputs and damped coordinates, which in the mechanical case implies that the controller does not require velocity measurements which are necessary in \cite{colbaugh1997global}. In contrast with \cite{LORSAT}, (iii) we consider underactuated systems, and (iv), we can inject damping in the underactuated coordinates. Moreover, (v) we do not require any change of coordinates during the control design.
The main contributions of this paper are summarized below:
\begin{itemize}
 \item [\textbf{C1}] We present a generalized framework for controlling passive systems, i.e., we consider input-affine nonlinear passive systems. This class of systems encompasses, but is not limited to, some popular modeling approaches, such as the pH framework or the EL formalism. Hence, we provide a method to stabilize nonlinear systems in different physical domains and whose models are not restricted to a particular modeling approach.
 \item [\textbf{C2}] We propose a method that considers input saturation without jeopardizing the stability of the closed-loop system nor increasing the stability analysis complexity. Consequently, the class of systems that can be stabilized is not reduced by considering saturated inputs. 
  \item [\textbf{C3}] We exploit the natural dissipation of the system to improve the performance of the closed-loop system.
 \item [\textbf{C4}] We provide the analysis of particular cases of interest, such as mechanical systems and electrical circuits, where the controllers are designed without solving partial differential equations (PDEs).
\end{itemize}
The remainder of this paper is organized as follows: we provide some preliminaries and the problem setting in Section \ref{sec:2}. Then, Sections \ref{sec:3} and \ref{sec:4}
are devoted to the control design,  where we establish the main results of this work. In Section \ref{sec:5}, we study some particular cases of interest. While, in Section \ref{sec:6},
some examples are provided to illustrate the applicability of the methodology proposed in this work. We wrap-up this paper with some concluding remarks and future work in Section \ref{sec:7}.\\[0.15cm]
\textbf{Caveat:} to ease the readability and simplify the notation in the proofs contained in this paper, when clear from the context, we omit the arguments of the functions.\\[0.15cm]
%%%%%%%%%%% Notation %%%%%%%%%%%%%%%%%%%%%
\textbf{Notation}: we denote the $n\times n$ identity matrix as $I_n$. The symbol $\mathbf{0}$ denotes a vector or matrix of appropriate dimensions whose entries are zeros. The symbols $\mbox{diag}\{ \}$ and $\mbox{block}\{ \}$ are used to denote diagonal and block diagonal matrices, respectively. Consider a vector $x\in\rea^{n}$, a smooth function $f:\mathbb{R}^n\to \mathbb{R}$, and the mappings $F:\mathbb{R}^n\to\mathbb{R}^m$, $G:\mathbb{R}^n\to\mathbb{R}^{n\times m}$. We define the differential operator $\nabla_x f:=\dfrac{\partial f}{\partial x}$ and $\nabla^2_x f:=\dfrac{\partial^2 f}{\partial x^2}$. The $ij-$th element of the $n\times m$ Jacobian matrix of $F(x)$ is given by $(\nabla_x F)_{ij}:=\dfrac{\partial F_j}{\partial x_i}$. We omit the subindex in $\nabla$ when it is clear from the context. Given the distinguished element $x_* \in \rea^n$, we define the constant vectors $F_*:=F(x_*)\in \rea^{m}$, $\left( \nabla f \right)_{*}:=\left. \nabla_{x}f(x) \right|_{x=x_{*}}$, and the constant matrices $G_*:=G(x_*)\in \rea^{n\times m}$,  $\left( \nabla F \right)_{*}:=\left. \nabla_{x}F(x) \right|_{x=x_{*}}$. Consider a matrix $A\in\rea^{n\times n}$. We say that $A$ is \textit{positive semi-definite}, denoted as $A\geq0$, if $A=A^{\top}$ and $x^{\top}Ax\geq 0$ for all $x\in\rea^{n}$, and \textit{positive definite}, denoted as $A>0$, if its symmetric and $x^{\top}Ax> 0$ for all $x\in\rea^{n}-\{\mathbf{0} \}$. We denote the Euclidean norm as $\lVert x \rVert$, i.e., $\lVert x \rVert = \sqrt{x^{\top}x}$, and the weighted Euclidean norm as $\lVert x \rVert_{A}:=\sqrt{x^{\top}Ax}$, where $A$ is positive (semi-)definite. The symbol $e_{i}$ denotes the $i$-th element of the canonical basis of $\rea^{a}$, where the context determines $a$, i.e., $e_{i}$ is a column vector such that its $i$-th element is one and the rest are zero.
\section{Preliminaries and problem setting}\label{sec:2}
Consider the input-affine nonlinear system
\begin{equation}
\begin{array}{rcl}
 \dot{x}&=&f(x)+g(x)u \\
% y&=& h(x)+j(x)u
\end{array}
  \label{afsys}
\end{equation} 
where $x\in\mathcal{X}\subseteq\rea^{n}$ is the state vector, $u\in \rea^{m}$ is the input, with $m\leq n$, $f:\mathcal{X}\to \rea$ denotes the so-called \textit{drift} vector field, 
$g:\mathcal{X}\to\rea^{n\times m}$ is the input matrix, which satisfies that $\rank\{ g(x)\}=m$.\\[0.2cm]
A broad class of physical systems, in different domains, can be described by the dynamics given in \eqref{afsys}. 
In this work, we are interested in the design of controllers that solve the set-point regulation problem for a class of nonlinear systems which admit the representation \eqref{afsys}. 
Therefore, our aim consists of ensuring that the closed-loop system has an asymptotically stable equilibrium at the desired point. 
Accordingly, the first step to formally formulate the control problem is to identify which points can be assigned as equilibria of the closed-loop system. 
Towards this end, we define the set that characterizes the assignable equilibria for the system \eqref{afsys}, which is given by
% The set that characterizes the assignable equilibria of system \eqref{afsys} is given by
\begin{equation*}
 \mathcal{E}:=\left\lbrace x\in \mathcal{X} | g^{\perp}(x)f(x)=\mathbf{0} \right\rbrace,
\end{equation*} 
where $g^{\perp}:\mathcal{X}\to \rea^{(n-m)\times n}$ is the left annihilator of $g(x)$, i.e., $g^{\perp}(x)g(x)=\mathbf{0}$.\\[0.15cm]
There exist several nonlinear control design techniques that solve the set-point regulation problem. 
However, the implementation of these techniques is sometimes hampered by physical limitations, which are not considered by the controller. 
Two common problems that hinder the practical implementation of such controllers are: 
\begin{itemize}
 \item The lack of sensors to measure some relevant signals, for instance, the passive output, which is often necessary to inject damping into the closed-loop system.
 \item The necessity of saturated control signals to ensure the safety of the equipment or to avoid undesired transient behaviors due to the limited working range of the actuators.
\end{itemize}
The objective of this work is to propose controllers that regulate physical systems at the corresponding desired point while overcoming the issues mentioned above.
Below we set the control problem.\\[0.2cm] 
\textbf{Problem formulation.} Given the system \eqref{afsys}, propose a systematic control design approach such that:
\begin{itemize}
 \item The closed-loop system has a locally asymptotically stable equilibrium at the desired equilibrium $x_{*}\in\mathcal E $.
 \item The elements of the control law $u$ are saturated, i.e.,
 $u_{i}\in[\mathcal{U}_{min},\mathcal{U}_{max}]$, where the limits of the interval are bounded and can be chosen.
 \item The controller injects damping into the closed-loop system without measuring the passive output.
\end{itemize}
%%%%%%%
%%%%%%

\section{Control design}\label{sec:3}

From a theoretical point of view, developing a general control design approach to stabilize systems that admit the representation given in \eqref{afsys} is, at best, a challenging task. Nonetheless, when dealing with physical systems, we can take advantage of some of their inherent properties. In particular, the passivity property exhibited by most of these systems can be exploited for control purposes. A thorough exposition of passive and cyclo-passive systems can be found in \cite{hill1976, VAN}. Here, for the sake of completeness, we provide the following definition of passive and cyclo-passive systems.
\begin{definition}\em
 The system \eqref{afsys} is said to be \textit{passive} if there exists a function $S:\rea^{n}\to \rea_+$, called the \textit{storage function}, and a signal $y\in\rea^{m}$, refer to as the \textit{passive output}, such that for all initial conditions $x(0)=x_{0}\in\rea^{n}$ the following inequality holds
\begin{equation}
 S(x(t))\leq S(x_{0})+\int_{{0}}^{t}u^{\top}(s)y(s)ds. \label{dissipativity}
\end{equation} 
Moreover, \eqref{afsys} is said to be \textit{cyclo-dissipative} if the storage function is not necessarily nonnegative, i.e., $S:\rea^{n}\to \rea$.
\end{definition}

\textit{Energy} and \textit{dissipation} play an essential role in the behavior of (cyclo-)passive systems. Consequently, \textit{energy-based} controllers, such as the ones derived from PBC techniques,
represent a suitable choice to control physical passive systems while preserving some physical intuition during the control design process. In this section, we develop a PBC approach that complies with the
requirements established in Section \ref{sec:2}. Hence, the first step consists in proving that the system under study is (cyclo-)passive. To this end, we characterize the input-affine nonlinear systems that are cyclo-passive via the following assumtpion.
\begin{assumption}\label{ass1}
 Given the system \eqref{afsys}, there exists $S:\mathcal{X}\to \rea$ such that
 \begin{equation}
   \left[\nabla S(x)\right]^{\top}f(x)  = - \lVert\ell(x)\rVert^{2}, \label{pascond}
 \end{equation} 
 where $\ell: \mathcal{X} \to \rea^{r}$ for some positive integer $r$.
\end{assumption}
%%%%%%%%%%%%%%%%%
Assumption \ref{ass1} is closely related to Hill-Moylan's theorem \cite{hill1976}, which provides necessary and sufficient conditions to determine whether \eqref{afsys} is cyclo-passive or not. However, at this point, the output of the plant has not been defined yet. Thus, it is not possible to establish the (cyclo)-passivity property of \eqref{afsys}. The proposition below establishes that Assumption \ref{ass1} guarantees that \eqref{afsys} is cyclo-passive and provides the structure of the passive output corresponding to the storage function $S(x)$.
%%%%%%%%%%%%%%%%%
\begin{proposition}\em
 Consider the system \eqref{afsys} satisfying Assumption \ref{ass1}, and some mappings $w:\mathcal{X} \to \rea^{r\times m}$, $D:\mathcal{X} \to \rea^{m\times m}$. Then,
 \begin{equation*}
  \dot{S}\leq u^{\top}y
 \end{equation*} 
 with 
 \begin{equation}
  y= g^{\top}(x)\nabla S(x)+2w^{\top}(x)\ell(x)+[w^{\top}(x)w(x)+D(x)]u, \label{ypas}
 \end{equation} 
 where $D(x)=-D^{\top}(x)$.
\end{proposition}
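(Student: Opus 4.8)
The plan is to differentiate the storage function along the trajectories of \eqref{afsys}, invoke Assumption \ref{ass1} to eliminate the drift contribution, and then exhibit the gap between $u^{\top}y$ and $\dot{S}$ as a perfect square, which is manifestly nonnegative.

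First I would compute $\dot{S}$ via the chain rule: since $S:\mathcal{X}\to\rea$, we have $\dot{S}=[\nabla S]^{\top}\dot{x}=[\nabla S]^{\top}f+[\nabla S]^{\top}gu$. Applying the identity \eqref{pascond} to the drift term replaces $[\nabla S]^{\top}f$ by $-\lVert\ell\rVert^{2}$, so that $\dot{S}=-\lVert\ell\rVert^{2}+[\nabla S]^{\top}gu$.

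Next I would expand $u^{\top}y$ using the definition \eqref{ypas}. Observing that $u^{\top}g^{\top}\nabla S=[\nabla S]^{\top}gu$, since it is a scalar, and that the skew-symmetry $D=-D^{\top}$ forces the quadratic term $u^{\top}Du$ to vanish, the product collapses to $u^{\top}y=[\nabla S]^{\top}gu+2u^{\top}w^{\top}\ell+\lVert wu\rVert^{2}$. Forming the difference $u^{\top}y-\dot{S}$, the two copies of $[\nabla S]^{\top}gu$ cancel and the term $-(-\lVert\ell\rVert^{2})$ is restored, leaving $u^{\top}y-\dot{S}=\lVert wu\rVert^{2}+2(wu)^{\top}\ell+\lVert\ell\rVert^{2}=\lVert wu+\ell\rVert^{2}\geq 0$, whence $\dot{S}\leq u^{\top}y$.

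This completion of the square is the heart of the argument, and it also explains the particular shape of the output \eqref{ypas}: the cross term $2w^{\top}\ell$ and the quadratic term $w^{\top}w$ are precisely what is needed to complete $\lVert\ell\rVert^{2}$ into $\lVert wu+\ell\rVert^{2}$. The only point demanding care—rather than a genuine obstacle—is verifying that $D$ contributes nothing to $u^{\top}y$; this is exactly why the hypothesis $D=-D^{\top}$ is imposed, since without skew-symmetry the sign of $u^{\top}Du$ could not be controlled and the dissipation inequality would fail in general.
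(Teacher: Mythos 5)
Your proof is correct and follows essentially the same route as the paper: differentiate $S$ along trajectories, use \eqref{pascond} to replace the drift term, exploit $u^{\top}Du=0$, and complete the square to obtain $-\lVert\ell+wu\rVert^{2}$. The only cosmetic difference is that the paper substitutes \eqref{ypas} directly into the expression for $\dot{S}$ and rewrites it as $-\lVert\ell+wu\rVert^{2}+y^{\top}u$, whereas you form the difference $u^{\top}y-\dot{S}$; the algebra is identical.
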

%%%%%%%%%%%%%%%%%%%%
\begin{proof}
 Compute the derivative of $S(x)$ along the trajectories of \eqref{afsys}, that is,
 \begin{equation}
  \begin{array}{rcl}
%    \dot{S}&=& \left(\nabla S\right)^{\top}\dot{x}\\[0.1cm]
  \dot{S} &=& \left(\nabla S\right)^{\top}(f+gu) \\[0.1cm]
%    &=& -\lVert\ell\rVert^{2}+ \left(\nabla S\right)^{\top}gu\\[0.1cm]
   &=& -\lVert\ell\rVert^{2}+ y^{\top}u - 2\ell^{\top} wu - u^{\top}w^{\top}wu\\[0.1cm]
   &=& - \lVert \ell + wu \rVert^{2}+y^{\top}u\leq y^{\top}u,
  \end{array}\label{dS}
 \end{equation} 
 where we used \eqref{ypas} and $u^{\top}Du=0$.
\end{proof}
Customarily, PBC techniques consist of two steps: first, the so-called \textit{energy-shaping} where the new energy---storage---function is modified to have a minimum at the desired equilibrium. Second, the damping injection into the closed-loop system ensures that the trajectories converge to the desired point. We present the following assumption to characterize the class of systems for which the controllers devised in this section can assign the desired equilibrium to the closed-loop system and render it stable.\\
% %%%%%%%%%%%%%%%
% \begin{assumption}\label{ass2}
%  Consider the system \eqref{afsys} and fix the desired equilibrium $x_{*}\in\mathcal{E}$. Then,
%  \begin{equation}
%   f_{*} = \mathbf{0}_{n} \implies \left(\nabla S\right)_{*} = \mathbf{0}_{n}.
%  \end{equation} 
% \end{assumption}
%%%%%%%%%%%%%%%%
\begin{assumption}\label{ass2}
Consider \eqref{afsys} satisfying Assumption \ref{ass1}, and the desired equilibrium $x_{*}\in\mathcal{E}$.
There exists $\gamma:\mathcal{X}\to \rea^{m}$ such that
 \begin{equation}
  \begin{array}{l}
  \dot{\gamma}  =  y\\[0.2cm]
   \left( \nabla S \right)_{*}+\left( \nabla \gamma \right)_{*}\kappa=\mathbf{0}\\[0.2cm]
  \left( \nabla^{2} S \right)_{*}+\displaystyle\sum_{i=1}^{m}\left( \nabla \gamma_{i} \right)_{*}\left( \nabla \gamma_{i} \right)^{\top}_{*}\alpha_{i}\beta_{i}+
  \left( \nabla^{2}\left(\gamma^{\top}\kappa\right) \right)_{*}>0. \label{hesscond}
  \end{array}
 \end{equation} 
where $\alpha_{i},\beta_{i}$ are positive constants and
 \begin{equation}
  \kappa:=(g_{*}^{\top}g_{*})^{-1}g_{*}^{\top}f_{*}. \label{kappa}
 \end{equation}
\end{assumption}
%%%%%%%%%%%%%%%
Proposition \ref{pro1} provides a saturated controller that addresses the stabilization problem for systems that satisfy Assumptions \ref{ass1} and \ref{ass2}.
%%%%%%%%%%%%%%%%%%%%%%
%%%%%%%%%%%%%%%%%%%%%%
\begin{proposition}\label{pro1}\em
Suppose that the system \eqref{afsys} and the desired equilibrium $x_{*}\in\mathcal{E}$ satisfy Assumptions \ref{ass1} and \ref{ass2}. Consider the control law 
 \begin{equation}
  u = -\nabla_{\gamma}\Phi(\gamma(x))-\kappa - \displaystyle\sum_{i=1}^{m}e_{i}k_{p_{i}}\tanh(y_{i}), \label{u}
 \end{equation} 
 where $k_{p_{i}}>0$, for $i=1,\dots,m$, and
 \begin{equation}
 \Phi(\gamma(x)):= \displaystyle\sum_{i=1}^{m}\frac{\alpha_{i}}{\beta_{i}}\ln\left(\cosh(\beta_{i}\left(\gamma_{i}(x)-\gamma_{i*}\right))\right). \label{Phi}
\end{equation} 
 Then:
 \begin{itemize}
  \item [(i)] The control signals satisfy $$u_{i}\in[-\kappa_{i}-k_{p_{i}}-\alpha_{i},-\kappa_{i}+k_{p_{i}} + \alpha_{i}].$$
  \item [(ii)] The closed-loop system has a \textit{locally stable} equilibrium point at $x_{*}$
  with Lyapunov function
  \begin{equation}
   S_{d}(x):= S(x) + \Phi(\gamma(x))+\kappa^{\top}\gamma(x). \label{Sd}
  \end{equation} 
  \item [(iii)] The equilibrium is \textit{locally asymptotically stable} if, on a domain $\Omega\subseteq\mathcal{X}$ containing $x_{*}$,
  \begin{equation}
   \begin{array}{rcl}
    \left.\begin{array}{rcl}
     \ell &=& w\left(\nabla_{\gamma}\Phi(\gamma(x))+\kappa\right) \\
      y &=& \mathbf{0}
    \end{array}\right\rbrace \implies x=x_{*}
   \end{array} \label{detect}
  \end{equation} 
%   \begin{equation}
%    y = \mathbf{0}_{m} \implies x=x_{*}. \label{detect}
%   \end{equation} 
 \end{itemize}
\end{proposition}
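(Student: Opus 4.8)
The plan is to use $S_{d}$ in \eqref{Sd} as a Lyapunov candidate and to verify items (i)--(iii) separately. Claim (i) is a direct computation: differentiating \eqref{Phi} gives $\left(\nabla_{\gamma}\Phi\right)_{i}=\alpha_{i}\tanh\!\left(\beta_{i}(\gamma_{i}-\gamma_{i*})\right)$, so the control law \eqref{u} reads, componentwise, $u_{i}=-\alpha_{i}\tanh\!\left(\beta_{i}(\gamma_{i}-\gamma_{i*})\right)-\kappa_{i}-k_{p_{i}}\tanh(y_{i})$. Since $|\tanh(\cdot)|<1$, the two hyperbolic-tangent terms are bounded in magnitude by $\alpha_{i}$ and $k_{p_{i}}$, which places $u_{i}$ in the stated interval.

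For (ii) I would first confirm that $x_{*}$ is an equilibrium. Because $x_{*}\in\mathcal{E}$ and $g^{\perp}g=\mathbf{0}$ imply $f_{*}\in\mathrm{range}(g_{*})$, the vector $\kappa$ in \eqref{kappa} satisfies $f_{*}=g_{*}\kappa$; at $x_{*}$ one has $\gamma=\gamma_{*}$, so $\nabla_{\gamma}\Phi$ vanishes, while $\dot{\gamma}=y$ forces $y_{*}=\mathbf{0}$ whenever $\dot{x}=\mathbf{0}$, so \eqref{u} yields $u_{*}=-\kappa$ and $f_{*}+g_{*}u_{*}=\mathbf{0}$. Next I would show that $S_{d}$ has a strict local minimum at $x_{*}$. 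Differentiating \eqref{Sd} through the composite maps $\Phi(\gamma(x))$ and $\kappa^{\top}\gamma(x)$ gives $\nabla S_{d}=\nabla S+(\nabla\gamma)\left(\nabla_{\gamma}\Phi+\kappa\right)$, which at $x_{*}$ collapses to $\left(\nabla S\right)_{*}+\left(\nabla\gamma\right)_{*}\kappa=\mathbf{0}$ by the second identity in \eqref{hesscond}. For the Hessian, the chain rule produces a curvature part $\sum_{i,j}\frac{\partial^{2}\Phi}{\partial\gamma_{i}\partial\gamma_{j}}(\nabla\gamma_{i})(\nabla\gamma_{j})^{\top}$ plus terms in which $\nabla_{\gamma}\Phi$ multiplies $\nabla^{2}\gamma_{i}$; at $x_{*}$ the latter vanish, and since $\Phi$ in \eqref{Phi} is separable with $\left.\partial^{2}\Phi/\partial\gamma_{i}^{2}\right|_{*}=\alpha_{i}\beta_{i}$, I obtain exactly $\nabla^{2}S_{d}|_{*}=\left(\nabla^{2}S\right)_{*}+\sum_{i=1}^{m}\alpha_{i}\beta_{i}\left(\nabla\gamma_{i}\right)_{*}\left(\nabla\gamma_{i}\right)_{*}^{\top}+\left(\nabla^{2}(\gamma^{\top}\kappa)\right)_{*}$, which is positive definite by the third line of \eqref{hesscond}.

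It then remains to check that $\dot{S}_{d}\le 0$. Using the refined dissipation equality in \eqref{dS}, namely $\dot{S}=-\lVert\ell+wu\rVert^{2}+y^{\top}u$, together with $\dot{\gamma}=y$, I compute $\dot{S}_{d}=-\lVert\ell+wu\rVert^{2}+\left(u+\nabla_{\gamma}\Phi+\kappa\right)^{\top}y$; substituting \eqref{u}, the bracket equals $-\sum_{i}e_{i}k_{p_{i}}\tanh(y_{i})$, so $\dot{S}_{d}=-\lVert\ell+wu\rVert^{2}-\sum_{i=1}^{m}k_{p_{i}}y_{i}\tanh(y_{i})\le 0$ because $y_{i}\tanh(y_{i})\ge 0$. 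Combined with the strict minimum this proves local stability, i.e. (ii). For (iii) I would invoke LaSalle's invariance principle on a compact sublevel set of $S_{d}$ contained in $\Omega$: on $\{\dot{S}_{d}=0\}$ both nonpositive terms must vanish, forcing $y=\mathbf{0}$ and $\ell+wu=\mathbf{0}$; with $y=\mathbf{0}$, \eqref{u} gives $u=-(\nabla_{\gamma}\Phi+\kappa)$, hence $\ell=w(\nabla_{\gamma}\Phi+\kappa)$, which is precisely the antecedent of \eqref{detect}. Therefore \eqref{detect} reduces the largest invariant set to $\{x_{*}\}$ and asymptotic stability follows.

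I expect the main obstacle to be the Hessian bookkeeping in \eqref{Sd}: one must carry the chain rule through the composite maps $\Phi(\gamma(x))$ and $\kappa^{\top}\gamma(x)$ and verify that, at $x_{*}$, the vanishing of $\nabla_{\gamma}\Phi$ removes every term involving $\nabla^{2}\gamma_{i}$ while the coefficients of the surviving rank-one terms align exactly with $\alpha_{i}\beta_{i}$, so that Assumption \ref{ass2} is revealed as the precise hypothesis required. A secondary subtlety is the well-posedness of the equilibrium, since $y$ in \eqref{ypas} depends on $u$ while $u$ in \eqref{u} depends on $y$; I would handle this by observing that $\dot{\gamma}=y$ vanishes at any state with $\dot{x}=\mathbf{0}$, which closes the algebraic loop consistently at $x_{*}$.
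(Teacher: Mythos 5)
Your proposal is correct and follows essentially the same route as the paper: the same saturation bound from the componentwise $\tanh$ form of \eqref{u}, the same Lyapunov candidate $S_{d}$ with $\left(\nabla S_{d}\right)_{*}=\mathbf{0}$ and $\left(\nabla^{2}S_{d}\right)_{*}>0$ supplied by Assumption \ref{ass2}, the same dissipation identity $\dot{S}_{d}=-\lVert\ell+wu\rVert^{2}-\sum_{i}k_{p_{i}}y_{i}\tanh(y_{i})$, and the same invariance argument (LaSalle versus the paper's Barbashin--Krasovskii corollary, which are interchangeable here). Your extra remarks on the chain-rule bookkeeping, on $f_{*}=g_{*}\kappa$ making $x_{*}$ an equilibrium, and on the consistency of the $u$--$y$ algebraic loop at $x_{*}$ are sound points that the paper leaves implicit.
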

%%%%%%%%%%%%%%%%%%%%%%%%%%%
\begin{proof}
 To prove (i) note that
 \begin{equation}
  \nabla_{\gamma}\Phi(\gamma(x)) = \displaystyle\sum_{i=1}^{m}e_{i}\alpha_{i}\tanh(\beta_{i}\left( \gamma_{i}(x)-\gamma_{i*} \right)). \label{nPhi}
 \end{equation} 
 Hence, the control law takes the form
 \begin{equation*}
  u_{i}= -\kappa_{i}-\alpha_{i}\tanh(\beta_{i}\left( \gamma_{i}(x)-\gamma_{i*} \right))-k_{p_{i}}\tanh(y_{i}).
 \end{equation*} 
 Since the function $\tanh(\cdot)$ is saturated, we get that
%  \begin{equation}
%   -\vartheta\leq\tanh(\vartheta)\leq \vartheta, \; \; \vartheta\in\rea,\label{tanh}
%  \end{equation} 
%  we have that
   \begin{equation}
   -\kappa_{i} - k_{p_{i}} - \alpha_{i} \leq u_{i} \leq  -\kappa_{i} + k_{p_{i}} + \alpha_{i}. \label{interval}
  \end{equation}
 To prove (ii) we compute
 \begin{eqnarray}
%  \nonumber  \dot{S}_{d} & = & \dot{S} + \dot{\Phi} + \dot{\gamma}^{\top}\kappa \\[0.1cm]
 \nonumber \dot{S}_{d} & = & -\lVert \ell + wu \rVert^{2} + y^{\top}u + \dot{\gamma}^{\top}\nabla_{\gamma}\Phi + \dot{\gamma}^{\top}\kappa \\[0.1cm]
   & = & -\lVert \ell + wu \rVert^{2} + y^{\top}\left( u + \nabla_{\gamma}\Phi + \kappa \right), \label{dSd1}
 \end{eqnarray} 
 where we used \eqref{dS} and \eqref{nPhi}. Substituting \eqref{u} in \eqref{dSd1}, yield
 \begin{equation}
  \dot{S}_{d}  \leq -y^{\top}\displaystyle\sum_{i=1}^{m}e_{i}k_{p_{i}}\tanh(y_{i}) = -\displaystyle\sum_{i=1}^{m}k_{p_{i}}y_{i}\tanh(y_{i})\leq 0 \label{dSd2}.
 \end{equation} 
Furthermore, from Assumption \ref{ass2}, we have that
\begin{equation}
 \left( \nabla S_{d}\right)_{*} = \left( \nabla S \right)_{*} + \left( \nabla\gamma \right)_{*}\kappa=\mathbf{0}, \label{eq4}
\end{equation} 
and
\begin{equation}
\begin{array}{rcl}
  \left(\nabla^{2} S_{d}\right)_{*} 
   &=&  \displaystyle\sum_{i=1}^{m}\left( \nabla \gamma_{i} \right)_{*}\left( \nabla \gamma_{i} \right)^{\top}_{*}\alpha_{i}\beta_{i}+\left( \nabla^{2}\left(\gamma^{\top}\kappa\right) \right)_{*}+\left(\nabla^{2} S \right)_{*}>0. 
  \end{array}
  \label{HessSd}
\end{equation} 
% Therefore, from Assumption \ref{ass2}, we get
% \begin{equation}
%  \begin{array}{rcl}
%   \left( \nabla S_{d}\right)_{*} = \mathbf{0}_{n}, && \displaystyle\left(\PAR{^{2}S_{d}}{x^{2}}\right)_{*}>0.
%  \end{array}
% \end{equation} 
Hence, $x_{*}$ is an isolated minimum of $S_{d}(x)$. Moreover, \eqref{dSd2} implies that $S_{d}(x)$ is non-increasing. Thus, $S_{d}(x)$ is positive definite with respect to $x_{*}$. Accordingly, $S_{d}(x)$ is a Lyapunov function and $x_{*}$ is stable.\\[0.2cm] 
To prove (iii), note that \eqref{dSd1} and \eqref{dSd2} yield
\begin{equation}
 \dot{S}_{d}  = 0 \Longleftrightarrow \left\lbrace \begin{array}{rcl}
\ell+wu&=&\mathbf{0}\\[0.1cm]
y & = & \mathbf{0}.
\end{array}\right. \label{u1detcond}
\end{equation} 
However, 
\begin{equation*}
 y  =  \mathbf{0} \implies u=-\left(\nabla_{\gamma}\Phi(\gamma(x))+\kappa\right).
\end{equation*} 
Therefore, 
\begin{equation*}
 \dot{S}_{d} =0 \iff \left\lbrace \begin{array}{rcl}
\ell&=&w\left(\nabla_{\gamma}\Phi(\gamma(x))+\kappa\right)\\[0.1cm]
y & = & \mathbf{0}.
\end{array}\right.
\end{equation*} 
Furthermore, 
\begin{equation*}
  x=x_{*}\implies \dot{S}_{d} = 0.
\end{equation*} 
Hence, \eqref{detect} implies that, on the domain $\Omega$,
\begin{equation*}
 \dot{S}_{d} = 0 \iff x=x_{*}.
\end{equation*} 
Thus, the asymptotic stability of $x_{*}$ follows from Barbashin-Krasovskii's theorem, see Corollary 4.1 in \cite{KHA}.
\end{proof}
%%%%%%%%%%%%%%%%
Note that the saturation limits of the control law \eqref{u} can be adjusted by modifying the control parameters $\alpha_{i}$ and $k_{p_{i}}$. Furthermore, we point out that the natural dissipation plays an important role in the stabilization of the system. In particular, we make the following remarks.
\begin{remark}
 If $\ell_{*}\neq \mathbf{0}$, then the desired equilibrium can be assigned only by shaping the energy using a $\gamma$ derived from a passive output with \textit{relative degree} zero.
 A proof of this fact can be found in \cite{BORSCHERCDC18}. This phenomenon is called the \textit{dissipation obstacle}. We refer the reader to \cite{MENetal} for further details on this topic.
\end{remark}
%%%%%%%%%%%%%%%
\begin{remark}
 If 
 \begin{equation}
  \ell  = w\left(\nabla_{\gamma}\Phi(\gamma(x))+\kappa\right) \implies x=x_{*}, \label{d1}
 \end{equation} 
 then it is not necessary to inject damping into the closed-loop system to ensure the asymptotic stability of the desired equilibrium. Moreover, 
 \begin{equation*}
  u = -\kappa - \nabla_{\gamma}\Phi(\gamma(x))
 \end{equation*}
 solves the regulation problem. On the other hand, if $\ell = \mathbf{0}$, then \eqref{detect} reduces to
 \begin{equation}
  y=\mathbf{0} \implies x=x_{*}, \label{d2}
 \end{equation} 
 In particular, when $x_{*}=\mathbf{0}$, \eqref{d2} is referred to as \textit{zero-state observability}, see \cite{ORTbook, KHA}. We stress the fact that \eqref{d2} is more conservative than \eqref{detect}. Indeed, proving that \eqref{d2} holds, would suffice to claim asymptotic stability of the equilibrium point in Proposition \ref{pro1}. 
\end{remark}
%%%%%%%%%%%%%%%%
The control law \eqref{u} addresses the regulation problem and ensures that the control signals are saturated, where the damping is injected through the passive output. However, the measurement of this signal is not always available, e.g., in mechanical systems without velocity sensors. To overcome this issue, we propose a modified control law such that the damping injection does not require the measurement of $y$. To this end, we introduce the controller state $x_{c}\in\rea^{m}$, and we define the following mappings
\begin{equation}
 \begin{array}{rcl}
z_{c}(\gamma(x),x_{c})&:=& \gamma(x) - \gamma_{*} + x_{c} \\ \Phi_{c}(z_{c}(\gamma(x)+x_{c}))&:=& \displaystyle\sum_{i=1}^{m}\frac{\alpha_{c_{i}}}{\beta_{c_{i}}}\ln\left(\cosh(\beta_{c_{i}}z_{c_{i}}(\gamma_{i}(x),x_{c_{i}}))\right), \end{array} \label{Phic}
\end{equation} 
where $\alpha_{c_{i}},\beta_{c_{i}}$ are positive constants. Without loss of generality, we consider $x_{c_{*}}=\mathbf{0}$. Moreover, to simplify the notation, we omit the argument of $z_{c}$.\\[0.2cm]
The following proposition provides a saturated control law that shapes the energy of the closed-loop system and injects damping without measuring $y$.
%%%%%%%%%%%%%%%%%%%%%
%%%%%%%%%%%%%%%%%%%%%
\begin{proposition}\label{pro2}\em
 Suppose that the system \eqref{afsys} and the desired equilibrium $x_{*}\in\mathcal{E}$ satisfy Assumptions \ref{ass1} and \ref{ass2}. Fix $\alpha_{c}=\alpha$ and $\beta_{c}=\beta$. Consider the positive definite matrices $R_{c}, K_{c}\in\rea^{m\times m}$, the dynamics
%  \begin{equation}
%   \dot{x}_{c} = -R_{c}\left[ \displaystyle\sum_{i=1}^{m}e_{i}\alpha_{i}\tanh(\beta_{i}z_{i}(\gamma(x),x_{c})) +K_{c}x_{c}  \right],\label{dxc}
%  \end{equation} 
 \begin{equation}
  \dot{x}_{c} = -R_{c}\left[ \displaystyle\sum_{i=1}^{m}e_{i}\alpha_{c_{i}}\tanh(\beta_{c_{i}}z_{c_{i}}) +K_{c}x_{c}  \right],\label{dxc}
 \end{equation} 
 and the control law
%  \begin{equation}
%   u=-\kappa-\displaystyle\sum_{i=1}^{m}e_{i}\alpha_{i}\tanh(\beta_{i}z_{i}(\gamma(x),x_{c})). \label{uc}
%  \end{equation} 
 \begin{equation}
  u=-\kappa-\displaystyle\sum_{i=1}^{m}e_{i}\alpha_{c_{i}}\tanh(\beta_{c_{i}}z_{c_{i}}). \label{uc}
 \end{equation}
 Then:
 \begin{itemize}
  \item [(i)] The control signals satisfy $u_{i}\in[-\kappa_{i}-\alpha_{c_{i}},-\kappa_{i}+\alpha_{c_{i}}]$.
  \item [(ii)] There exists $K_{c}$ such that the closed-loop system has a locally asymptotically stable equilibrium point at $(x_{*},\mathbf{0})$  with Lyapunov function
%   \begin{equation}
%    S_{d_{c}}(x,x_{c}):= S(x) + \Phi_{c}(z(\gamma(x)+x_{c}))+\kappa^{\top}\gamma(x)+\frac{1}{2}\lVert x_{c}\rVert^{2}_{K_{c}},
%   \end{equation} 
  \begin{equation}
   S_{d_{c}}(x,x_{c}):= S(x) + \Phi_{c}(z_{c})+\kappa^{\top}\gamma(x)+\frac{1}{2}\lVert x_{c}\rVert^{2}_{K_{c}}, \label{Sdc}
  \end{equation}
  where $\Phi_{c}(z_{c})$ is defined as in \eqref{Phic}.
  \item [(iii)] The equilibrium is locally asymptotically stable if, on a domain $\Omega_{c}\subseteq\mathcal{X}\times\rea^{m}$ containing $(x_{*},\mathbf{0})$, the following condition holds 
  \begin{equation}
   \begin{array}{rcl}
    \left.\begin{array}{rcl}
     \ell &=& w\left( \kappa-K_{c}x_{c} \right) \\
      y &=& \mathbf{0}
    \end{array}\right\rbrace \implies \left\lbrace\begin{array}{rcl}
      x&=&x_{*} \\ x_{c}&=&\mathbf{0}                                      
                                           \end{array}\right.
   \end{array} \label{detectc}
  \end{equation} 
\end{itemize}
\end{proposition}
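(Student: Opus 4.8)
The plan is to mirror the three-part structure of the proof of Proposition \ref{pro1}, treating the controller state $x_c$ as the device that replaces the explicit $\tanh(y_i)$ damping of \eqref{u} by a dynamically generated one that never reads $y$. Part (i) is immediate: since $u_i = -\kappa_i - \alpha_{c_i}\tanh(\beta_{c_i}z_{c_i})$ by \eqref{uc} and $|\tanh(\cdot)|<1$, the stated interval follows exactly as in \eqref{interval}. Before the Lyapunov analysis I would verify that $(x_*,\mathbf{0})$ is genuinely an equilibrium: there $z_c=\gamma(x_*)-\gamma_*+\mathbf{0}=\mathbf{0}$, so $u=-\kappa$ and the bracket in \eqref{dxc} vanishes; moreover $x_*\in\mathcal{E}$ gives $g_*^{\perp}f_*=\mathbf{0}$, hence $f_*\in\mathrm{im}(g_*)$, so \eqref{kappa} yields $g_*\kappa=f_*$ and $\dot{x}|_*=f_*-g_*\kappa=\mathbf{0}$.

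For part (ii) I would differentiate $S_{d_c}$ in \eqref{Sdc} along the closed loop, term by term: $\dot{S}=-\lVert\ell+wu\rVert^{2}+y^{\top}u$ from \eqref{dS}; the chain rule gives $\dot{\Phi}_c=(\nabla_{z_c}\Phi_c)^{\top}(y+\dot{x}_c)$ since $\dot{z}_c=\dot{\gamma}+\dot{x}_c=y+\dot{x}_c$; $\kappa^{\top}\dot{\gamma}=\kappa^{\top}y$; and the quadratic term contributes $x_c^{\top}K_c\dot{x}_c$. Writing $\nabla_{z_c}\Phi_c=\sum_i e_i\alpha_{c_i}\tanh(\beta_{c_i}z_{c_i})$ and substituting the control law \eqref{uc}, i.e. $u=-\kappa-\nabla_{z_c}\Phi_c$, the terms $y^{\top}u+(\nabla_{z_c}\Phi_c)^{\top}y+\kappa^{\top}y$ cancel identically, and using \eqref{dxc} with $K_c=K_c^{\top}$ the rest collapses to
\[
\dot{S}_{d_c}=-\lVert\ell+wu\rVert^{2}-(\nabla_{z_c}\Phi_c+K_c x_c)^{\top}R_c(\nabla_{z_c}\Phi_c+K_c x_c)\leq 0,
\]
so the damping is now injected through the $R_c$-weighted term without measuring $y$.

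Next I would show $(x_*,\mathbf{0})$ is a strict local minimum of $S_{d_c}$. The gradient vanishes there: $\nabla_x S_{d_c}|_{*}=(\nabla S)_*+(\nabla\gamma)_*\kappa=\mathbf{0}$ by Assumption \ref{ass2} (the $\Phi_c$-contribution drops because $\nabla_{z_c}\Phi_c$ vanishes at $z_c=\mathbf{0}$), and $\nabla_{x_c}S_{d_c}|_{*}=\nabla_{z_c}\Phi_c+K_c x_{c*}=\mathbf{0}$. Evaluating at $z_c=\mathbf{0}$ gives $\nabla^{2}_{z_c}\Phi_c=\diag\{\alpha_i\beta_i\}=:\Lambda$ (using $\alpha_c=\alpha$, $\beta_c=\beta$), and the block Hessian
\[
\nabla^{2}S_{d_c}|_{*}=\begin{bmatrix} A & (\nabla\gamma)_*\Lambda \\ \Lambda(\nabla\gamma)_*^{\top} & \Lambda+K_c \end{bmatrix},
\]
where $A=(\nabla^{2}S)_*+\sum_i(\nabla\gamma_i)_*(\nabla\gamma_i)_*^{\top}\alpha_i\beta_i+(\nabla^{2}(\gamma^{\top}\kappa))_*$ is exactly the matrix that \eqref{hesscond} forces to be positive definite. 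This is the crux and the main obstacle: I would take the Schur complement about the $(1,1)$ block, reducing positive definiteness of the full Hessian to $(\Lambda+K_c)-\Lambda(\nabla\gamma)_*^{\top}A^{-1}(\nabla\gamma)_*\Lambda>0$; since $A>0$ is fixed, choosing $K_c$ sufficiently large (e.g. $K_c=cI_m$ with $c$ large) makes this hold, which is the asserted existence of $K_c$. Combined with $\dot{S}_{d_c}\leq 0$, Lyapunov's theorem gives stability (the asymptotic refinement is obtained in part (iii)).

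Finally, for part (iii) I would characterize $\{\dot{S}_{d_c}=0\}$: as $R_c>0$, this forces $\ell+wu=\mathbf{0}$ and $\nabla_{z_c}\Phi_c+K_c x_c=\mathbf{0}$. On the largest invariant set contained in this set the second identity makes $\dot{x}_c\equiv\mathbf{0}$ by \eqref{dxc}, so $x_c$ and hence $\nabla_{z_c}\Phi_c$ are constant; injectivity of $\tanh$ then forces $z_c$, and therefore $\gamma(x)$, to be constant, so that $y=\dot{\gamma}\equiv\mathbf{0}$ — this subtle step is what supplies the $y=\mathbf{0}$ hypothesis of \eqref{detectc}. Substituting $u=K_c x_c-\kappa$ (from $\nabla_{z_c}\Phi_c+K_c x_c=\mathbf{0}$ and \eqref{uc}) into $\ell+wu=\mathbf{0}$ gives $\ell=w(\kappa-K_c x_c)$, so both antecedents of \eqref{detectc} hold on this set, yielding $x=x_*$ and $x_c=\mathbf{0}$. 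Hence the equilibrium is the only invariant set, and asymptotic stability follows from Barbashin--Krasovskii's theorem (Corollary 4.1 in \cite{KHA}), as in Proposition \ref{pro1}.
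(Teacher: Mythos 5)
Your proposal is correct and follows essentially the same route as the paper's proof: the same cancellation yielding $\dot S_{d_c}=-\lVert \ell+wu\rVert^2-\lVert\dot x_c\rVert^2_{R_c^{-1}}$, the same block Hessian with the Schur-complement argument for large $K_c$, and the same invariance argument recovering $y=\mathbf{0}$ and $\ell=w(\kappa-K_cx_c)$ before invoking Barbashin--Krasovskii. The only (welcome) additions are the explicit check that $(x_*,\mathbf{0})$ is an equilibrium and the use of injectivity of $\tanh$ in place of the paper's differentiation of $\nabla_{x_c}\Phi_c+K_cx_c=\mathbf{0}$; both are equivalent.
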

%%%%%%%%%%%%%%%%%%%%%
\begin{proof}
To prove (i) note that
\begin{equation*}
 u_{i} = - \kappa_{i} - \alpha_{c_{i}}\tanh(\beta_{c_{i}}z_{c_{i}}). 
\end{equation*} 
Thus
   \begin{equation}
   -\kappa_{i} - \alpha_{c_{i}} \leq u_{i} \leq  -\kappa_{i}  + \alpha_{c_{i}}. \label{intervaly}
  \end{equation} 
To prove (ii) note that
\begin{equation}
 \nabla_{\gamma}\Phi_{c} = \nabla_{x_{c}}\Phi_{c} = 
 \displaystyle\sum_{i=1}^{m}e_{i}\alpha_{c_{i}}\tanh(\beta_{c_{i}}z_{c_{i}}). \label{nPhic}
\end{equation} 
Hence,
\begin{eqnarray}
%  \nonumber \dot{S}_{d_{c}}&=& \dot{S} + \dot{\Phi}_{c} + \dot{\gamma}^{\top}\kappa + \dot{x}_{c}^{\top}K_{c}x_{c} \\[0.1cm]
 \nonumber \dot{S}_{d_{c}}&=& -\lVert \ell + wu \rVert^{2} + y^{\top}\left( u+ \kappa \right) + \dot{z}_{c}^{\top}\nabla_{x_{c}}\Phi_{c}+\dot{x}_{c}^{\top}K_{c}x_{c}\\[0.1cm]
%  \nonumber &=& -\lVert \ell + wu \rVert^{2} + y^{\top}\left( u+ \kappa \right) + \left(y + \dot{x}_{c} \right)^{\top}\nabla_{x_{c}}\Phi_{c}+\dot{x}_{c}^{\top}K_{c}x_{c}\\[0.1cm]
%   &=&-\lVert \ell + wu \rVert^{2}  - \left\lVert\nabla_{x_{c}}\Phi_{c}+K_{c}x_{c}\right\rVert^{2}_{R_{c}}\leq 0 \label{dSdc}
 &=&-\lVert \ell + wu \rVert^{2}  - \left\lVert \dot{x}_{c}\right\rVert^{2}_{R^{-1}_{c}}\leq 0 \label{dSdc}
\end{eqnarray} 
where we used \eqref{dS}, \eqref{dxc}, \eqref{uc}, and \eqref{nPhic}. Furthermore, since $z_{c_{*}}=\mathbf{0}$, we have
\begin{equation}
 \left( \nabla_{x} S_{d_{c}} \right)_{*}  =  \left( \nabla S \right)_{*} + \left( \nabla \gamma \right)_{*}\kappa \label{nSdc}.
\end{equation} 
Therefore, from Assumption \ref{ass2}, $\left( \nabla_{x} S_{d_{c}} \right)_{*} = \mathbf{0}$.
% \begin{equation}
%  \left( \nabla_{x} S_{d_{c}} \right)_{*} = \left( \nabla S_{d} \right)_{*}
% \end{equation} 
% with $S_{d}(x)$ defined in \eqref{Sd}. Moreover, using the arguments of the proof of Proposition \ref{pro1}, we get that
% \begin{equation}
%  \left( \nabla_{x} S_{d_{c}} \right)_{*} = \mathbf{0}.
% \end{equation} 
Moreover,
\begin{equation*}
  \nabla_{x_{c}} S_{d_{c}} = \nabla_{x_{c}}\Phi_{c} + K_{c}x_{c}. 
\end{equation*} 
Accordingly, $\left( \nabla_{x_{c}} S_{d_{c}}\right)_{*}=\mathbf{0}$.
Hence, $(x_{*}, \mathbf{0})$ is a critical point of $S_{d_{c}}(x,x_c)$. Furthermore, some simple computations show that
\begin{equation}
\begin{array}{rcl}
\left( \nabla^{2}S_{d_{c}} \right)_{*}&=&\begin{bmatrix}
                      \left( \nabla^{2}S \right)_{*}+\left( \nabla^{2}\left(\gamma^{\top}\kappa\right) \right)_{*} && \mathbf{0} \\ \mathbf{0} && K_{c}                 
                                       \end{bmatrix}+\begin{bmatrix}
                                                       \left( \nabla \gamma \right)_{*} \\ I_{m}
                                                     \end{bmatrix}\displaystyle\sum_{i=1}^{m}e_{i}e_{i}^{\top}\alpha_{c_{i}}\beta_{c_{i}}
\begin{bmatrix}
                                                      \left( \nabla \gamma \right)^{\top}_{*} & I_{m}
                                                     \end{bmatrix}.\end{array}
 \label{hessSdc}
\end{equation} 

% \begin{equation}
%  \begin{array}{rcl}
% \left( \displaystyle\PAR{^{2}S_{d_{c}}}{x^{2}} \right)_{*} &=&  \left( \displaystyle\PAR{^{2}S}{x^{2}} \right)_{*}+\displaystyle\sum_{i=1}^{m}\left( \nabla \gamma_{i} \right)_{*}\left( \nabla \gamma_{i} \right)^{\top}_{*}\alpha_{c_{i}}\beta_{c_{i}}+
%   \left( \displaystyle\PAR{^{2}\left(\gamma^{\top}\kappa\right)}{x^{2}} \right)_{*}\\[0.3cm]
%   \left( \displaystyle\PAR{^{2}S_{d_{c}}}{x\partial x_{c}} \right)_{*}&=& \left( \nabla\gamma \right)_{*}\displaystyle\sum_{i=1}^{m}e_{i}e_{i}^{\top}\alpha_{c_{i}}\beta_{c_{i}} = 
%   \left( \displaystyle\PAR{^{2}S_{d_{c}}}{x_{c}\partial x} \right)_{*}^{\top}\\[0.3cm]
%   \left( \displaystyle\PAR{^{2}S_{d_{c}}}{x_{c}^{2}} \right)_{*} &=& \displaystyle\sum_{i=1}^{m}e_{i}e_{i}^{\top}\alpha_{c_{i}}\beta_{c_{i}} + K_{c}.
%  \end{array}
% \end{equation} 
Note that the block $(1,1)$ of $\left( \nabla^{2}S_{d_{c}} \right)_{*}$ can be expressed as $\left( \nabla^{2}S_{d} \right)_{*}$; see \eqref{HessSd}. Consequently, the blocks $(1,1)$ and $(2,2)$ of $\left( \nabla^{2}S_{d_{c}} \right)_{*}$ are positive definite. Moreover, $K_{c}$ is a control parameter whose only restriction is to be positive definite. Therefore, a Schur complement analysis shows that a $K_{c}$ large enough ensures that $\left( \nabla^{2}S_{d_{c}} \right)_{*}>0$.
% \begin{equation}
%  \displaystyle\sum_{i=1}^{m}e_{i}e_{i}^{\top}\alpha_{c_{i}}\beta_{c_{i}} + K_{c} - 
%  \left( \PAR{^{2}S_{d_{c}}}{x_{c}\partial x} \right)_{*}\left( \PAR{^{2}S_{d_{c}}}{x^{2}} \right)^{-1}_{*}\left( \PAR{^{2}S_{d_{c}}}{x\partial x_{c}} \right)_{*}>0,
% \end{equation} 
% which holds for an appropriate selection of $K_{c}$, for instance,
% \begin{equation}
%  K_{c} = \left( \PAR{^{2}S_{d_{c}}}{x_{c}\partial x} \right)_{*}\left( \PAR{^{2}S_{d_{c}}}{x^{2}} \right)^{-1}_{*}\left( \PAR{^{2}S_{d_{c}}}{x\partial x_{c}} \right)_{*}.
% \end{equation} 
Hence, an appropriate selection of $K_{c}$ ensures that $S_{d_{c}}(x,x_{c})$ has an isolated minimum at $(x_{*},\mathbf{0})$. This, together with \eqref{dSdc}, implies that $S_{d_{c}}(x,x_{c})$ is positive definite with respect to the equilibrium. Thus, $S_{d_{c}}(x,x_{c})$ is a Lyapunov function and the equilibrium is stable.\\[0.1cm]
To prove (iii) note that, from \eqref{nPhic},
\begin{equation}
 \begin{array}{rcl}
  \dot{S}_{d_{c}}  =  0 & \Longleftrightarrow & \left\lbrace\begin{array}{l}
\ell + wu = \mathbf{0}_{r} \\[0.1cm]
% \displaystyle\sum_{i=1}^{m}e_{i}\alpha_{c_{i}}\tanh(\beta_{c_{i}}z_{c_{i}})+K_{c}x_{c}= \mathbf{0}\implies \dot{x}_{c} = \mathbf{0}.
\dot{x}_{c} = \mathbf{0} \implies \nabla_{x_{c}}\Phi_{c}+K_{c}x_{c}= \mathbf{0}.
  \end{array}\right. 
 \end{array}\label{dSdceq}
\end{equation} 
From \eqref{dSdceq}, we have the following chain of implications
\begin{equation}
 \begin{array}{ccl}
&&\displaystyle\frac{d}{dt}\left( \nabla_{x_{c}}\Phi_{c}+K_{c}x_{c} \right)= \mathbf{0} \\[0.5cm]
&\implies& \displaystyle\sum_{i=1}^{m}e_{i}\dot{z}_{c_{i}}\alpha_{c_{i}}\left[\sech(\beta_{c_{i}}z_{c_{i}})\right]^{2}+K_{c}\dot{x}_{c}= \mathbf{0} \\[0.5cm]
&\implies& \displaystyle\sum_{i=1}^{m}e_{i}\dot{\gamma}_{i}\alpha_{c_{i}}\left[\sech(\beta_{c_{i}}z_{c_{i}})\right]^{2}= \mathbf{0}\\[0.5cm]
&\implies& y=\mathbf{0}.
% \implies x=x_{*},
 \end{array} \label{dampxc}
\end{equation} 
Moreover, \eqref{uc} takes the form
\begin{equation*}
 \begin{array}{rcl}
  u &=& -\kappa +K_{c}x_{c}.
 \end{array} 
\end{equation*} 
Hence, the asymptotic stability of the equilibrium can be proven using similar arguments as in the proof of Proposition \ref{pro1}.
\end{proof}
%%%%%%%%%%%%%%%%%%%%%%
In the control law \eqref{uc}, the damping is injected via the controller state $x_{c}$. In particular, we propose the specific dynamics given in \eqref{dxc}. This can be interpreted as a dirty-derivative filter; see \cite{LOROBS} and \cite{dirksz2013tuning}. In contrast to the mentioned references, we extend this approach to a more general class of systems, i.e., input-affine nonlinear systems, while considering saturation in the inputs. We stress that an important consequence of \eqref{dampxc} is that the desired equilibrium is asymptotically stable if $y$ is observable, i.e., if \eqref{d2} is satisfied.\\[0.2cm]
%%%%%%
The saturated controllers developed in this section address the regulation problem by shaping the energy of the system and injecting damping either through the passive output or the controller state $x_{c}$. In both cases, the damping injection is closely related to the output port. However, to improve the performance of the closed-loop system, it may be necessary to inject in coordinates that are not associated with $y$. In the following section, we provide an alternative to address this issue.
%%%%%%%%%%%%%%%%%%%%%
\section{On the role of the dissipation}\label{sec:4}
%%%%%%%%%%%%%%%%%%%%%%%%%%%%%%%%%
%%%%%%%%%%%%%%%%%%%%%%%%%%%%%%%%%%
Dissipation is present in most physical systems. Nonetheless, the mathematical models that represent these systems commonly neglect the dissipation inherent to them. This section proposes a method to inject damping into the coordinates with natural dissipation without measuring them. This can be exploited to improve the convergence rate of the closed-loop system or to remove an undesired transient behavior, such as oscillations. Furthermore, this approach can be instrumental when the system under study exhibits poor damping propagation, resulting in a slow convergence rate.\\[0.2cm]
We characterize the systems for which this new damping injection is suitable through the following assumption. 
\begin{assumption}\label{ass3}
There exist $S:\mathcal{X}\to \rea$ and $\eta:\mathcal{X}\to\rea^{s}$, with $1\leq s \leq n-m$, such that the system \eqref{afsys} satisfies 
 \begin{equation*}
  \begin{array}{rcl}
  \left[\nabla\gamma(x)\right]^{\top}\nabla\eta(x) &=&\mathbf{0} \\[0.15cm]
   -\lVert \ell(x) + w(x)u \rVert^{2} &\leq& -\lVert \dot{\eta} \rVert_{\Lambda_{\ell}(x)}^{2} - \lVert y \rVert_{\Lambda_{c}(x)}^{2} 
  \end{array}
 \end{equation*} 
 where $\dot{\gamma}=y$ and the diagonal matrices $\Lambda_{\ell}:\mathcal{X}\to\rea^{s\times s}$, $\Lambda_{c}:\mathcal{X}\to\rea^{m\times m}$ are positive definite.
\end{assumption}
%%%%%%%%%%%%%%%%%%%%%%%%%%%
Assumption \ref{ass3} requires that some coordinates not associated with the output port are damped. Moreover, such damping must be decoupled from the rest of the coordinates. Some examples of this phenomenon are friction between surfaces, resistors in series with inductors, and resistors in parallel with capacitors. In order to use the mentioned damping in the control design, we introduce the virtual state $x_{\ell}\in\rea^{m}$, and the following mappings
\begin{eqnarray}
 \nonumber z_{\ell}(\eta(x),x_{\ell})&:=& \Upsilon\left[\eta(x)-\eta_{*} \right] + K_{\ell}x_{\ell}\\
 \Phi_{\ell}(\eta(x),x_{\ell})&:=&\displaystyle\sum_{i=1}^{m}\frac{\alpha_{\ell_{i}}}{\beta_{\ell_{i}}}\ln\left( \cosh(\beta_{\ell_{i}}z_{\ell_{i}}(\eta(x),x_{\ell})) \right) \label{Phil}
\end{eqnarray}
where the constant matrix $\Upsilon\in\rea^{m\times s}$ satisfies $\textnormal{rank}\{\Upsilon \} = \min\{m,s \}$, $K_{\ell}\in\rea^{m\times m}$ is a diagonal positive definite matrix, and $\alpha_{\ell_{i}},\beta_{\ell_{i}}$ are positive constant parameters. Without loss of generality,  we consider $x_{\ell_{*}}=\mathbf{0}$. To simplify the notation, we omit the argument from $z_{\ell}$.\\[0.2cm]
%%%%%%
The following proposition provides a saturated control law that shapes the energy and modifies the damping of the coordinates that are naturally damped.
%%%%%%%%%%%%%%%%%%%%%%%
%%%%%%%%%%%%%%%%
%%%%%%%%%%%%%%%%
\begin{proposition}\label{pro4}\em
 Suppose that the system \eqref{afsys} and the desired equilibrium $x_{*}\in\mathcal{E}$ satisfy Assumptions \ref{ass1}--\ref{ass3}. Fix $\alpha_{c}=\alpha$, $\beta_{c}=\beta$, and consider the control law
%  \begin{equation}
%   u=-\kappa-\nabla_{x_{\ell}}\Phi_{\ell}(z_{\ell}(\eta(x),x_{\ell}))-\nabla_{\gamma}\Phi_{c}(z(\gamma(x),x_{c})), \label{ulc}
%  \end{equation} 
\begin{equation}
  u=-\kappa-\nabla_{z_{\ell}}\Phi_{\ell}(z_{\ell})-\nabla_{\gamma}\Phi_{c}(z_{c}), \label{ulc}
 \end{equation}
 where $\Phi_{c}(z_{c})$ is defined in \eqref{Phic}, $\Phi_{\ell}(z_{\ell})$ is defined in \eqref{Phil}, the dynamics of $x_{c}$ are given by \eqref{dxc}, and
 \begin{equation}
  \dot{x}_{\ell} = -R_{\ell} \displaystyle\sum_{i=1}^{m}e_{i}\alpha_{\ell_{i}}\tanh(\beta_{\ell_{i}}z_{\ell_{i}}), \label{dxl}
 \end{equation}
 Then:
 \begin{itemize}
  \item [(i)] The control signals satisfy 
   $$u_{i}\in\left[-\kappa_{i}-\alpha_{\ell_{i}}-\alpha_{c_{i}},-\kappa_{i}+\alpha_{\ell_{i}}+\alpha_{c_{i}}\right].$$
  \item [(ii)] There exist $\ K_{c}, \ K_{\ell}$, and $R_{\ell}$ such that the closed-loop system has a locally asymptotically stable equilibrium point at $(x_{*},\mathbf{0},\mathbf{0})$,
  with Lyapunov function
%   \begin{equation}
%    \begin{array}{rcl}
%     S_{d_{\ell}}(x,x_{\ell},x_{c})&:=& S(x) + \kappa^{\top}\gamma(x)+\frac{1}{2}\lVert x_{c} \rVert^{2}_{K_{c}}\\[0.2cm]&&+\Phi_{\ell}(z_{\ell}(\eta(x),x_{\ell}))+\Phi_{c}(z(\gamma(x),x_{c})).
%    \end{array}
%   \end{equation} 
%   \begin{equation}
%     S_{d_{\ell}}(x,x_{\ell},x_{c}):= S(x) + \kappa^{\top}\gamma(x)+\frac{1}{2}\lVert x_{c} \rVert^{2}_{K_{c}}+\epsilon\Phi_{\ell}(z_{\ell})+\Phi_{c}(z_{c}).
%   \end{equation} 
    \begin{equation*}
    S_{d_{\ell}}(x,x_{\ell},x_{c}):= S_{d_{c}}(x,x_{c}) + \Phi_{\ell}(z_{\ell}),
  \end{equation*} 
  with $S_{d_{c}}(x,x_{c})$ defined in \eqref{Sdc}.
  \item [(iii)] For an appropriate selection of $K_{\ell}$ and $R_{\ell}$, the equilibrium is locally asymptotically stable if, on a domain $\Omega_{d}\subseteq\mathcal{X}\times\rea^{m}\times\rea^{m}$ containing $(x_{*},\mathbf{0},\mathbf{0})$, the following condition holds
  \begin{equation}
   \begin{array}{rcl}
    \left.\begin{array}{rcl}
     \dot{\eta} &=& \mathbf{0} \\
      y &=& \mathbf{0}\\
      \dot{x}_{\ell}&=&\mathbf{0} \\
      \dot{x}_{c}&=&\mathbf{0}
    \end{array}\right\rbrace \implies \left\lbrace\begin{array}{rcl}
      x&=&x_{*} \\ x_{\ell}&=&\mathbf{0} \\ x_{c}&=&\mathbf{0}.                                       
                                           \end{array}\right.
   \end{array} \label{detectlc}
  \end{equation} 
  \end{itemize}
\end{proposition}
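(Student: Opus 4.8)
The plan is to reuse the two-step Lyapunov argument of Propositions \ref{pro1} and \ref{pro2} with the augmented candidate $S_{d_\ell}=S_{d_c}+\Phi_\ell$, treating the new coordinate $z_\ell$ as a perturbation that the natural dissipation guaranteed by Assumption \ref{ass3} can absorb. Part (i) is direct: since $\nabla_{z_\ell}\Phi_\ell=\sum_{i=1}^m e_i\alpha_{\ell_i}\tanh(\beta_{\ell_i}z_{\ell_i})$ and $\nabla_\gamma\Phi_c=\sum_{i=1}^m e_i\alpha_{c_i}\tanh(\beta_{c_i}z_{c_i})$, the law \eqref{ulc} reads $u_i=-\kappa_i-\alpha_{\ell_i}\tanh(\beta_{\ell_i}z_{\ell_i})-\alpha_{c_i}\tanh(\beta_{c_i}z_{c_i})$, and $|\tanh(\cdot)|\le 1$ yields the stated interval.

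For part (ii) I would first verify that $(x_*,\mathbf 0,\mathbf 0)$ is a critical point of $S_{d_\ell}$: since $z_{c*}=z_{\ell*}=\mathbf 0$, both $\nabla_{z_c}\Phi_c$ and $\nabla_{z_\ell}\Phi_\ell$ vanish there, so $(\nabla_x S_{d_\ell})_*=(\nabla_x S_{d_c})_*=\mathbf 0$ by \eqref{nSdc} and Assumption \ref{ass2}, while $\nabla_{x_c}S_{d_\ell}$ and $\nabla_{x_\ell}S_{d_\ell}=K_\ell\nabla_{z_\ell}\Phi_\ell$ also vanish. Differentiating along the closed loop and using \eqref{dS}, \eqref{dxc}, \eqref{dxl}, and $u+\nabla_{z_c}\Phi_c+\kappa=-\nabla_{z_\ell}\Phi_\ell$ from \eqref{ulc}, the bookkeeping of Proposition \ref{pro2} extends to
\begin{equation*}
\dot S_{d_\ell}=-\lVert\ell+wu\rVert^2-\lVert\dot x_c\rVert^2_{R_c^{-1}}-(\nabla_{z_\ell}\Phi_\ell)^\top K_\ell R_\ell\,\nabla_{z_\ell}\Phi_\ell-y^\top\nabla_{z_\ell}\Phi_\ell+(\nabla_{z_\ell}\Phi_\ell)^\top\Upsilon\dot\eta .
\end{equation*}
The step I expect to be the main obstacle is controlling the last two, sign-indefinite, cross terms, which couple $z_\ell$ to $y$ and to $\dot\eta$. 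Here I would apply the dissipation bound of Assumption \ref{ass3}, $-\lVert\ell+wu\rVert^2\le-\lVert\dot\eta\rVert^2_{\Lambda_\ell}-\lVert y\rVert^2_{\Lambda_c}$, which recasts $\dot S_{d_\ell}$ as a quadratic form in $(\dot\eta,y,\nabla_{z_\ell}\Phi_\ell)$ whose only indefinite contributions are those cross terms. Completing the square in $\nabla_{z_\ell}\Phi_\ell$ (equivalently, a Schur complement against the $K_\ell R_\ell$ block) shows that the form is negative semidefinite whenever
\begin{equation*}
\begin{bmatrix}\Lambda_\ell & \mathbf 0\\ \mathbf 0 & \Lambda_c\end{bmatrix}-\tfrac14\begin{bmatrix}\Upsilon^\top\\ -I_m\end{bmatrix}(K_\ell R_\ell)^{-1}\begin{bmatrix}\Upsilon & -I_m\end{bmatrix}\ge\mathbf 0 ;
\end{equation*}
as $\Lambda_\ell,\Lambda_c$ are positive definite and, by continuity, bounded below near $x_*$, this holds on a neighborhood of the equilibrium once $K_\ell R_\ell$ is taken diagonal and large enough, giving $\dot S_{d_\ell}\le 0$.

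It then remains to establish $(\nabla^2 S_{d_\ell})_*>0$. In the coordinates $(x,x_c,x_\ell)$ the contribution of $\Phi_\ell$ is the rank form $N\Delta_\ell N^\top$ with $N:=\col\{(\nabla\eta)_*\Upsilon^\top,\mathbf 0,K_\ell\}$ and $\Delta_\ell:=\diag\{\alpha_{\ell_i}\beta_{\ell_i}\}$, whereas the $(x,x_c)$ block is exactly $(\nabla^2 S_{d_c})_*$ of \eqref{hessSdc}. I would take the Schur complement with respect to the $x_\ell$ block $K_\ell\Delta_\ell K_\ell$; because these matrices are diagonal and commute, the resulting correction to the $xx$ block equals $(\nabla\eta)_*\Upsilon^\top\Delta_\ell\Upsilon(\nabla\eta)_*^\top$, which cancels the $\Phi_\ell$ term and leaves precisely $(\nabla^2 S_{d_c})_*$. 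By Proposition \ref{pro2} this is positive definite for $K_c$ large enough, so $(\nabla^2 S_{d_\ell})_*>0$ and $(x_*,\mathbf 0,\mathbf 0)$ is an isolated minimum; together with $\dot S_{d_\ell}\le 0$ this proves stability. Notably, the choice of $R_\ell$ is free in the Hessian analysis and enters only through the dissipation estimate above.

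Finally, for part (iii) I would invoke Barbashin--Krasovskii as in Proposition \ref{pro1}. Choosing $K_\ell R_\ell$ large enough to make the above matrix inequality strict, $\dot S_{d_\ell}=0$ forces $\dot x_c=\mathbf 0$, $y=\mathbf 0$, $\dot\eta=\mathbf 0$, and $\nabla_{z_\ell}\Phi_\ell=\mathbf 0$; the latter gives $z_\ell=\mathbf 0$ and hence $\dot x_\ell=-R_\ell\nabla_{z_\ell}\Phi_\ell=\mathbf 0$. Thus on the largest invariant set contained in $\{\dot S_{d_\ell}=0\}$ the antecedent of \eqref{detectlc} holds, forcing $(x,x_\ell,x_c)=(x_*,\mathbf 0,\mathbf 0)$, from which local asymptotic stability follows.
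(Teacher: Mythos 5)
Your proposal is correct and follows essentially the same route as the paper: the same Lyapunov candidate $S_{d_\ell}$, the same quadratic form in $(\dot\eta, y, \dot{x}_\ell)$ (you parametrize it by $\nabla_{z_\ell}\Phi_\ell$, which is equivalent since $\dot{x}_\ell=-R_\ell\nabla_{z_\ell}\Phi_\ell$), the identical Schur-complement condition \eqref{schurcond1}, and Barbashin--Krasovskii for part (iii). Your explicit Schur-complement computation showing that eliminating the $x_\ell$ block of $(\nabla^{2}S_{d_\ell})_*$ recovers exactly $(\nabla^{2}S_{d_c})_*$ is a slightly more careful justification of the Hessian positivity than the paper's brief assertion, but it is a refinement of the same argument rather than a different one.
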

%%%%%%%%%%%%%%%%
\begin{proof}
 To prove (i) note that, from \eqref{nPhic} and \eqref{Phil}, the control law \eqref{ulc} can be rewritten as
%  \begin{equation}
%   u = -\kappa -\displaystyle\sum_{i=1}^{m}e_{i}\left\lbrace \alpha_{i}\tanh\left( \beta_{i}z_{i} \right)+\alpha_{\ell_{i}}\tanh\left( \beta_{\ell_{i}}z_{\ell_{i}} \right)\right\rbrace.
%  \end{equation} 
 \begin{equation*}
  u = -\kappa -\displaystyle\sum_{i=1}^{m}e_{i}\left\lbrace \alpha_{c_{i}}\tanh\left( \beta_{c_{i}}z_{c_{i}} \right)+\alpha_{\ell_{i}}\tanh\left( \beta_{\ell_{i}}z_{\ell_{i}} \right)\right\rbrace.
 \end{equation*} 
 Therefore,
 \begin{equation*}
  -\kappa_{i}-\alpha_{c_{i}}-\alpha_{\ell_{i}} \leq u_{i}\leq -\kappa_{i}+\alpha_{c_{i}}+\alpha_{\ell_{i}}.
 \end{equation*}
 To prove (ii) note that \eqref{dxl} can be rewritten as 
 \begin{equation*}
  \dot{x}_{\ell} = -R_{\ell}\nabla_{z_{\ell}}\Phi_{\ell}.
 \end{equation*} 
 Hence, from \eqref{dSdc}, \eqref{ulc} and Assumption \ref{ass3}, we have that\footnote{Note that $R_{\ell}$ and $K_{\ell}$ are diagonal. Thus, their product commutes.}
 \begin{equation*}
  \begin{array}{rcl}
%    \dot{S}_{d_{\ell}}&=& \dot{S} + \epsilon\dot{\Phi_{\ell}} + \dot{\Phi}_{c} + \dot{\gamma}^{\top}\kappa+\dot{x}_{c}^{\top}K_{c}x_{c}\\[0.1cm]
%    \dot{S}_{d_{\ell}}&=& -\lVert \ell + wu \rVert^{2}+ y^{\top}\left( u+\kappa \right) +
% \epsilon\dot{z_{\ell}}^{\top}\nabla_{z_{\ell}}\Phi_{\ell}+\dot{z}_{c}^{\top}\nabla_{z_{c}}\Phi_{c}+\dot{x}_{c}^{\top}K_{c}x_{c}\\[0.1cm]
\dot{S}_{d_{\ell}}&\leq& -\lVert \dot{\eta} \rVert^{2}_{\Lambda_{\ell}} - \lVert y \rVert^{2}_{\Lambda_{c}} + \left(\dot{z_{\ell}}-y\right)^{\top}\nabla_{z_{\ell}}\Phi_{\ell} -\lVert \dot{x}_{c} \rVert^{2}_{R^{-1}_{c}} \\[0.1cm] &=&
-\lVert \dot{\eta} \rVert^{2}_{\Lambda_{\ell}} - \lVert y \rVert^{2}_{\Lambda_{c}} -\lVert \dot{x}_{c} \rVert^{2}_{R^{-1}_{c}} - \lVert \dot{x}_{\ell} \rVert^{2}_{K_{\ell}R^{-1}_{\ell}}+ \left( \Upsilon\dot{\eta} - y \right)^{\top}\nabla_{z_{\ell}}\Phi_{\ell}\\[0.1cm]
&=& -\begin{bmatrix}
        \dot{\eta}^{\top} & y^{\top} & \dot{x}_{\ell}^{\top}
       \end{bmatrix}\Theta 
\begin{bmatrix}
        \dot{\eta}^{\top} & y^{\top} & \dot{x}_{\ell}^{\top}
       \end{bmatrix}^{\top}-\lVert \dot{x}_{c} \rVert^{2}_{R^{-1}_{c}}.
  \end{array}
 \end{equation*}
 where 
  \begin{equation}
 \Theta:= \begin{bmatrix}
                    \Lambda_{\ell} && \mathbf{0} && \frac{1}{2}\Upsilon^{\top}R_{\ell}^{-1} \\[0.2cm] \mathbf{0} && \Lambda_{c} && -\frac{1}{2}R_{\ell}^{-1} \\[0.2cm]
                     \frac{1}{2}R_{\ell}^{-1}\Upsilon && -\frac{1}{2}R_{\ell}^{-1} &&  K_{\ell}R_{\ell}^{-1}.
                    \end{bmatrix}\label{Th}
 \end{equation} 
 Thus, $\dot{S}_{d_{\ell}}\leq 0$ if $\Theta$ is positive semi-definite. Moreover, via Schur complement, we get that $\Theta\geq0$ if and only if
 \begin{equation}
   \begin{bmatrix}
\Lambda_{\ell} & \mathbf{0}_{s\times m} \\ \mathbf{0} & \Lambda_{c}
\end{bmatrix}-\frac{1}{4}\begin{bmatrix}
               \Upsilon^{\top} \\ -I_{m}
              \end{bmatrix}K_{\ell}^{-1}R_{\ell}^{-1}\begin{bmatrix}
                                                                   \Upsilon & -I_{m}
                                                                  \end{bmatrix}
\geq 0.\label{schurcond1}
 \end{equation}
 Note that \eqref{schurcond1} holds for $K_{\ell}$ and $R_{\ell}$ large enough. Accordingly, an appropriate selection of these matrices ensures that $S_{d_{\ell}}(x,x_{\ell},x_{c})$ is non-increasing.
 Moreover, $\left(\nabla_{z_{\ell}}\Phi_{\ell}  \right)_{*}=\mathbf{0}$. This, together with $\left( \nabla S_{d_{c}} \right)_{*}=\mathbf{0}$---see the proof of Proposition \ref{pro2}---yields 
 \begin{equation*}
  \left( \nabla S_{d_{\ell}} \right)_{*}=\mathbf{0}.
 \end{equation*} 
 Furthermore, 
 \begin{equation*}
  \left( \nabla^{2} S_{d_{\ell}}\right)_{*} = \left( \nabla^{2} S_{d_{c}}\right)_{*}+\left( \nabla^{2} \Phi_{\ell}\right)_{*}.
 \end{equation*} 
Some simple computations show that $\left( \nabla^{2} \Phi_{\ell}\right)_{*}\geq0$ and, using the arguments of the proof of Proposition \ref{pro2}, a $K_{c}$ large enough guarantees that $\left( \nabla^{2} S_{d_{c}}\right)_{*}>0$. Therefore, there exists a $K_{c}$ such that $\left( x_{*},\mathbf{0},\mathbf{0} \right)$ is an isolated minimum of the closed-loop storage function. Therefore, $S_{d_{\ell}}(x,x_{\ell},x_{c})$ qualifies as a Lyapunov function and the closed-loop system has a stable equilibrium at $(x_{*}, \mathbf{0}, \mathbf{0})$.\\[0.2cm]
 To prove (iii), we consider $K_{\ell}$ and $R_{\ell}$ such that $\Theta>0$. Then, 
 \begin{equation*}
   \dot{S}_{d_{\ell}} = 0 \iff \left\lbrace\begin{array}{rcl}
     \dot{\eta} &=& \mathbf{0} \\
      y &=& \mathbf{0}\\
      \dot{x}_{\ell}&=&\mathbf{0} \\
      \dot{x}_{c}&=&\mathbf{0}
    \end{array}\right.
 \end{equation*} 
Thus, the asymptotic stability of the equilibrium is proven using the same arguments used in the proof of Proposition \ref{pro1}.
\end{proof}
\begin{remark}
 The stability properties of the equilibrium points in Propositions \ref{pro1}--\ref{pro4} are \textit{global} if the corresponding Lyapunov function is \textit{radially unbounded}, see \cite{KHA}.
\end{remark}
%%%%%%%%%%%%%%%%
%%%%%%%%%%%%%
In general, to ensure the existence of $\gamma(x)$ in Assumption \ref{ass2}, it is necessary to solve a PDE. However, in some particular cases, $\gamma(x)$ can be found by satisfying some algebraic conditions. A thorough discussion on this topic is provided in \cite{TACBOROSCH}. Noteworthy, the well-defined structure of some physical systems permits finding $\gamma(x)$ without solving PDEs, as is shown in Section \ref{sec:5}.
%%%%%%%%%%%%%%%
%%%%%%%%%%%%%%
\section{Particular cases}\label{sec:5}
The controllers developed in Sections \ref{sec:3} and \ref{sec:4} are devised to stabilize a rather general class of nonlinear systems characterized by Assumptions \ref{ass1}--\ref{ass3}. In principle, such assumptions should be checked system by system. However, in some particular cases of interest, these assumptions always hold or can be straightforwardly verified. This section focuses on mechanical systems modeled in the pH framework and electrical circuits represented via the BM equations and how the mentioned assumptions are translated into these systems. We stress that these modeling approaches encompass a broad range of systems. See, \cite{BRAYI,BRAYII,GEObook,jeltsema2004tuning,VANJEL}.

% However, the dynamics \eqref{afsys} does not provide much useful information to check whether the assumptions made in the previous sections hold or not. Hence, it is convenient to study particular cases of \eqref{afsys} where such assumptions can be straightforwardly verified. Following this idea, we remark that the pH approach and the BM equations have been proved to be suitable to represent mechanical and electrical systems, respectively. See, for instance, \cite{BRAYI,BRAYII,GEObook,jeltsema2004tuning,VANJEL}.
% In this section, we focus on these kinds of systems, and we show how the mentioned modeling frameworks satisfy—or simplify—Assumptions \ref{ass1}--\ref{ass5}.
\subsection{Mechanical systems in the pH representation}
Consider a mechanical system represented by
\begin{equation}
\arraycolsep=0.8pt
\def\arraystretch{1.5}
\begin{array}{rcl}
 \begin{bmatrix}
  \dot{q} \\[0.2cm] \dot{p}
 \end{bmatrix}&=&
\begin{bmatrix}
 \mathbf{0} & I_{n} \\[0.2cm] -I_n & -\mathcal{D}(q,p)
\end{bmatrix}
\begin{bmatrix}
 \nabla_{q}H(q,p) \\[0.2cm] \nabla_{p}H(q,p) 
\end{bmatrix}
+\begin{bmatrix}
 \mathbf{0} \\[0.2cm] G 
 \end{bmatrix}
u, \\[0.5cm]
H(q,p)&:=&\dfrac{1}{2}p^{\top}M^{-1}(q)p+V(q),
\end{array} \label{phmec}
\end{equation}
where:\footnote{To simplify the notation, we consider that mechanical systems have dimension $2n$.}
\begin{itemize}
 \item $q,p\in\rea^{n}$ represent the generalized positions and momenta, respectively.
 \item $V:\rea^n\to\rea_{+}$ denotes the potential energy of the system.
 \item The so-called inertia matrix $M:\rea^n\to\rea^{n\times n}$ is positive definite. For further details on the computation and properties of this matrix
we refer the reader to \cite{KELROB,SPONGVID}.
\item The Hamiltonian $H:\rea^n\times\rea^n\to\rea_{+}$ is given by the total energy of the system.
\item The input matrix $G\in \rea^{n\times m}$ is of the form
\begin{equation*}
 G = \begin{bmatrix}
      \mathbf{0} \\[0.1cm] I_{m}
     \end{bmatrix}.
\end{equation*} 
\item $\mathcal{D}:\rea^{n}\times \rea^{n}\to \rea^{n\times n}$ is a diagonal positive semi-definite matrix that represents the dissipation---damping---of the system.
\end{itemize}
For mechanical systems of the form \eqref{phmec}, the set of assignable equilibria is 
\begin{equation}
 \mathcal{E}_{\mathcal{M}}:=\left\lbrace (q,p)\in\rea^{n}\times\rea^{n}| G^{\top}\nabla V(q) = \mathbf{0}, p=\mathbf{0}\right\rbrace. \label{eqmec}
\end{equation} 
Given \eqref{phmec}, we make the following observations:
\begin{itemize}
 \item [\textbf{O1}] The system \eqref{phmec} admits a representation of the form \eqref{afsys}, with
 \begin{equation*}
%   \begin{array}{rl}
   f(x)= \begin{bmatrix}
 \mathbf{0} & I_{n} \\[0.2cm] -I_n & -\mathcal{D}(q,p)
\end{bmatrix}
\begin{bmatrix}
 \nabla_{q}H(q,p) \\[0.2cm] \nabla_{p}H(q,p) 
\end{bmatrix}, \; g(x)= \begin{bmatrix}
 \mathbf{0} \\[0.2cm] G 
 \end{bmatrix}.
%   \end{array}
 \end{equation*} 
%  Furthermore, Assumption \ref{ass1} holds for $S(x)=H(q,p)$, and $$\lVert \ell(x) \rVert^{2} = \lVert \dot{q} \rVert^{2}_{\mathcal{D}(q,p)}.$$
 \item[\textbf{O2}] Some simple computations show that 
 \begin{equation*}
   \dot{H}= - \lVert \dot{q} \rVert^{2}_{\mathcal{D}(q,p)} + \dot{q}^{\top}Gu.
 \end{equation*} 
 Hence, Assumption \ref{ass1} holds for $S(x)=H(q,p)$, and $\lVert \ell(x) \rVert^{2} = \lVert \dot{q} \rVert^{2}_{\mathcal{D}(q,p)}.$ Moreover, the passive output is given by $y=G^{\top}\dot{q}$, and a suitable selection of $\gamma(x)$ is
 \begin{equation*}
  \gamma(q) = G^{\top}q.
 \end{equation*} 
  \item[\textbf{O3}] In this case,
 \begin{equation*}
  \kappa = -G^{\top}\left( \nabla V \right)_{*}.
 \end{equation*} 
 Then, from \eqref{eqmec}, we have
 \begin{equation*}
  \left( \nabla H \right)_{*} + (\nabla \gamma)_{*}\kappa = \mathbf{0}.
 \end{equation*} 
 \item[\textbf{O4}] Since $\ell_{*}=\mathbf{0}$, there is no dissipation obstacle.
 \item[\textbf{O5}] Since $\mathcal{D}(q,p)$ is diagonal, we can rewrite it as follows
 \begin{equation*}
  \mathcal{D}(q,p) = \bkd{\mathcal{D}_{u}(q,p), \mathcal{D}_{a}(q,p)},
 \end{equation*} 
 where $\mathcal{D}_{u}:\rea^{n}\times\rea^{n}\to\rea^{(n-m)\times (n-m)}$ and $\mathcal{D}_{a}:\rea^{n}\times\rea^{n}\to\rea^{m\times m}$ are diagonal matrices. Accordingly, if $\mathcal{D}_{a}(q,p)$ has full rank and
 $\mathcal{D}_{u}(q,p)$ has at least one nonzero entry, Assumption \ref{ass3} holds with $\Lambda_{c}(x)= \mathcal{D}_{a}(q,p)$, the diagonal matrix $\Lambda_{\ell}(x)$ consists of all the nonzero entries of $\mathcal{D}_{u}(q,p)$,
 and $\eta(x)$ is given by the positions that satisfy $q_{j}\mathcal{D}_{u_{j}}(q,p)\neq \mathbf{0}$, where
 \begin{equation*}
 \begin{array}{rcl}
  q_{j}:=e_{j}^{\top}q, & & \mathcal{D}_{u_{j}}(q,p):=e_{j}^{\top}\mathcal{D}_{u}(q,p)e_{j},
 \end{array}
 \end{equation*}
 for $j=1,\cdots, n-m$.
\end{itemize}
From the observations listed above, we conclude that for mechanical systems that can be expressed as in \eqref{phmec}, the controllers developed in Sections \ref{sec:3} and \ref{sec:4} stabilize
the system at the desired equilibrium if
\begin{equation}
 \begin{array}{l}
  \left(\nabla^2 V \right)_{*}+G\left(\diag\{\alpha_{1}\beta_{1}, \dots, \alpha_{m}\beta_{m} \}\right)G^{\top}>0,\\
   \left(\mathcal{D}+GG^{\top}\right)\dot{q} = \mathbf{0} \implies \left\lbrace \begin{array}{rcl}
                                                                        q&=& q_{*} \\ p&=& \mathbf{0}.
                                                                       \end{array}\right.
 \end{array}\label{meccond}
\end{equation} 
%%%%%%%%%%%%%%%%%%%%%%%%%%%%%%%%%%%%%%%%%%%%%%%
\subsubsection{Fully actuated mechanical systems}
%%%%%%%%%%%%%%%%%%%%%%%%%%%%%%%%%%%%%%%%%%%%%%%%%
A mechanical system such that $n=m$ is said to be \textit{fully actuated}. This subclass of mechanical systems is of great interest in robotics as a broad range of robotic arms satisfies the aforementioned conditions.\\[0.2cm]
When dealing with fully actuated mechanical systems, it is possible to modify Proposition \ref{pro2} to provide a stronger result, i.e., a saturated controller that guarantees the global asymptotic stability of the desired equilibrium while avoiding velocity measurements. This controller is introduced in the following proposition.
%%%%%
\begin{proposition}\label{pro:fully}\em
 Consider the system \eqref{phmec}, with $G=I_{n}$, the function $\Phi_{c}(\gamma(x),x_{c})$ given in \eqref{Phic}, with $\gamma(x)=q$, and the dynamics of $x_{c}$ provided in \eqref{dxc}. Assume that $\nabla V(q)$ is bounded. Then, the saturated control law
 \begin{equation}
  u=\nabla V(q)- \nabla_{q}\Phi_{c}(q,x_{c}) ,
  \label{umec}
 \end{equation} 
 ensures that $(q_{*},\mathbf{0},\mathbf{0})$ is a \textit{globally asymptotically stable} 
 equilibrium point of the closed-loop system with Lyapunov function
 \begin{equation}
  H_{d}(q,p,x_{c})=\Phi_{c}(q,x_{c})+\frac{1}{2}p^{\top}M^{-1}(q)p+\frac{1}{2}x_{c}^{\top}K_{c}x_{c}. \label{Hdmec}
 \end{equation} 
\end{proposition}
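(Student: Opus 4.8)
The plan is to treat Proposition \ref{pro:fully} as the fully actuated, global counterpart of Proposition \ref{pro2}, isolating the single structural change that upgrades local to global stability: in \eqref{umec} the feed-forward term is the \emph{full} gradient $\nabla V(q)$ rather than the constant $(\nabla V)_{*}=-\kappa$ used in \eqref{uc}. First I would verify that $(q_{*},\mathbf{0},\mathbf{0})$ is an equilibrium of the closed loop: at this point $z_{c}=q-q_{*}+x_{c}=\mathbf{0}$, so $\nabla_{q}\Phi_{c}=\mathbf{0}$ by \eqref{nPhic}, hence $u=\nabla V(q_{*})$; substituting into \eqref{phmec} with $p=\mathbf{0}$ makes $\dot q=M^{-1}(q_{*})\mathbf{0}=\mathbf{0}$, the term $-\nabla_{q}H$ reduces to $-\nabla V(q_{*})$ and is cancelled by $u$, so $\dot p=\mathbf{0}$, and $\dot x_{c}=\mathbf{0}$ from \eqref{dxc}. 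I would also remark that, because $\nabla V(q)$ is cancelled in full, the closed-loop potential force is governed solely by $\Phi_{c}$; this is precisely why any $q_{*}$ (not merely a critical point of $V$) can be assigned, and why $H_{d}$ in \eqref{Hdmec} contains $\Phi_{c}$ in place of $V$.

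Next I would establish $\dot H_{d}\le 0$. Using the open-loop balance from observation \textbf{O2}, namely $\dot H=-\lVert\dot q\rVert_{\mathcal{D}}^{2}+\dot q^{\top}u$ with $G=I_{n}$, together with $H_{d}=H-V+\Phi_{c}+\tfrac12\lVert x_{c}\rVert_{K_{c}}^{2}$ and $\nabla_{q}\Phi_{c}=\nabla_{x_{c}}\Phi_{c}$, the cross terms $\dot q^{\top}u$ and $-\dot q^{\top}\nabla V$ arising from $\dot H-\dot V$ combine with $u=\nabla V-\nabla_{q}\Phi_{c}$ so that all $\nabla V$ contributions cancel, leaving $\dot H_{d}=-\lVert\dot q\rVert_{\mathcal{D}}^{2}+(\nabla_{x_{c}}\Phi_{c}+K_{c}x_{c})^{\top}\dot x_{c}$. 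Substituting \eqref{dxc} yields $\dot H_{d}=-\lVert\dot q\rVert_{\mathcal{D}}^{2}-\lVert\dot x_{c}\rVert_{R_{c}^{-1}}^{2}\le 0$, exactly the structure of \eqref{dSdc}. The saturation claim is immediate: the feedback part $-\nabla_{q}\Phi_{c}=-\sum_{i}e_{i}\alpha_{c_{i}}\tanh(\beta_{c_{i}}z_{c_{i}})$ lives in $[-\alpha_{c_{i}},\alpha_{c_{i}}]$, and the hypothesis that $\nabla V(q)$ is bounded keeps the feed-forward part bounded, so $u$ is bounded.

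For the global statement I would show $H_{d}$ is positive definite with respect to $(q_{*},\mathbf{0},\mathbf{0})$ and radially unbounded. Each of the three summands of $H_{d}$ is nonnegative ($M^{-1}(q)>0$, $K_{c}>0$, and $\Phi_{c}\ge0$ with $\Phi_{c}=0\iff z_{c}=\mathbf{0}$); if $H_{d}=0$ then $\tfrac12\lVert x_{c}\rVert_{K_{c}}^{2}=0$ forces $x_{c}=\mathbf{0}$, whence $z_{c}=q-q_{*}=\mathbf{0}$ gives $q=q_{*}$, and the kinetic term forces $p=\mathbf{0}$, so the minimum is isolated. Radial unboundedness I expect to be the main obstacle, because $\Phi_{c}$ depends only on the \emph{combination} $z_{c}=q-q_{*}+x_{c}$ and is flat along $z_{c}=\mathbf{0}$; I would argue by cases along any sequence with $\lVert(q,p,x_{c})\rVert\to\infty$: if $\lVert x_{c}\rVert\to\infty$ the term $\tfrac12\lVert x_{c}\rVert_{K_{c}}^{2}\to\infty$; if $x_{c}$ stays bounded while $\lVert q\rVert\to\infty$ then $\lVert z_{c}\rVert\to\infty$ and the linear growth of $\ln\cosh$ gives $\Phi_{c}\to\infty$; and if $q,x_{c}$ stay bounded while $\lVert p\rVert\to\infty$ then, since $M^{-1}(q)$ admits a positive lower eigenvalue bound on the relevant compact $q$-set (continuity and $M(q)>0$), the kinetic term diverges. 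In every case $H_{d}\to\infty$.

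Finally I would invoke the global version of Barbashin--Krasovskii's theorem \cite{KHA}. On the set $\{\dot H_{d}=0\}$ we have $\dot x_{c}=\mathbf{0}$, so $\nabla_{x_{c}}\Phi_{c}+K_{c}x_{c}$ is constant; differentiating and using $\dot z_{c}=\dot q$ (since $\dot x_{c}=\mathbf{0}$) together with $[\sech(\beta_{c_{i}}z_{c_{i}})]^{2}>0$ forces $\dot q=\mathbf{0}$, hence $y=\dot q=\mathbf{0}$ and $p=\mathbf{0}$; then on the invariant set $\dot p=-\nabla_{q}\Phi_{c}=\mathbf{0}$ gives $z_{c}=\mathbf{0}$, which combined with $\nabla_{x_{c}}\Phi_{c}+K_{c}x_{c}=\mathbf{0}$ yields $x_{c}=\mathbf{0}$ and $q=q_{*}$. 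Thus the largest invariant set contained in $\{\dot H_{d}=0\}$ is the single point $(q_{*},\mathbf{0},\mathbf{0})$, and radial unboundedness promotes this to \emph{global} asymptotic stability. I would note that, in contrast to Proposition \ref{pro2}, no separate detectability hypothesis like \eqref{detectc} is required: full actuation makes the invariant-set computation collapse directly to the equilibrium.
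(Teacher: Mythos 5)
Your proposal is correct and follows essentially the same route as the paper: the same energy balance $\dot H_{d}=-\lVert\dot q\rVert^{2}_{\mathcal{D}}-\lVert\dot x_{c}\rVert^{2}_{R_{c}^{-1}}$, the same invariance chain $\dot x_{c}=\mathbf{0}\Rightarrow\dot q=\mathbf{0}\Rightarrow p=\mathbf{0}\Rightarrow\nabla_{q}\Phi_{c}=\mathbf{0}\Rightarrow x_{c}=\mathbf{0},\,q=q_{*}$, and the same appeal to radial unboundedness for globality. You merely supply details the paper leaves implicit (the case analysis for radial unboundedness and the direct global positive-definiteness of $H_{d}$ in place of the paper's Hessian check), which is welcome but not a different argument.
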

%%%%%%%%%%%%%%%%%
\begin{proof}
%  Note that $y=\dot{q}$. Hence, $\gamma$ can be chosen as $q$.
%  Furthermore, if we define
%  \begin{equation}
%   V_{d}(q,x_{c}):=\displaystyle\sum_{i=1}^{n}\frac{\alpha_{c_{i}}}{\beta_{c_{i}}}\ln\left(\cosh\left(\beta_{c_{i}}\left( \tilde{q}+x_{c} \right)  \right) \right),
%  \end{equation} 
%  then
%  \begin{equation}
%   u=\nabla V-\nabla_{q}V_{d}. \label{umec}
%  \end{equation} 
 Note that the closed-loop system takes the form
 \begin{equation*}
  \begin{bmatrix}
   \dot{q} \\[0.2cm] \dot{p} \\[0.2cm] \dot{x}_{c}
  \end{bmatrix} = \begin{bmatrix}
                \mathbf{0} && I_{n} && \mathbf{0}\\[0.2cm] -I_n && -\mathcal{D} && \mathbf{0}\\[0.2cm] \mathbf{0}
                && \mathbf{0} && -R_{c}
                  \end{bmatrix}
\begin{bmatrix}
 \nabla_{q} H_{d} \\[0.2cm] \nabla_{p} H_{d} \\[0.2cm] \nabla_{x_{c}} H_{d}
\end{bmatrix}
 \end{equation*}
 with $H_{d}$ defined in \eqref{Hdmec}. Therefore
 \begin{equation*}
  \dot{H}_{d} = -\lVert \dot{q} \rVert^{2}_{\mathcal{D}} - \lVert \nabla_{x_{c}} H_{d} \rVert^{2}_{R_{c}}.
 \end{equation*} 
 Furthermore, some simple computations show that $\left( \nabla H_{d} \right)_{*}=\mathbf{0}$ and $\left( \nabla^{2} H_{d} \right)_{*}>0$ for any $K_{c}>0$. 
% \begin{equation}
%   \begin{bmatrix}
%   \displaystyle\sum_{i=1}^{n}e_{i}e_{i}^{\top}\alpha_{c_{i}}\beta_{c_{i}} && \mathbf{0} && \displaystyle\sum_{i=1}^{n}e_{i}e_{i}^{\top}\alpha_{c_{i}}\beta_{c_{i}}\\[0.2cm]
%   \mathbf{0} && M^{-1}_{*} && \mathbf{0}\\[0.2cm]
%   \displaystyle\sum_{i=1}^{n}e_{i}e_{i}^{\top}\alpha_{c_{i}}\beta_{i} && \mathbf{0} && \displaystyle\sum_{i=1}^{n}e_{i}e_{i}^{\top}\alpha_{c_{i}}\beta_{c_{i}}+K_{c}
%  \end{bmatrix}>0.
% \end{equation} 
Accordingly, $(q_{*},\mathbf{0},\mathbf{0})$ is a stable equilibrium point for the closed-loop system.\\
To prove asymptotic stability, note that, following the arguments given in the proof of Proposition \ref{pro2}, we have that $\dot{H}_{d}=0$ implies $\dot{x}_{c}=\mathbf{0}$ and $y=\dot{q}=\mathbf{0}$. In particular, the latter leads to
\begin{equation*}
 p=\dot{p}=\nabla_{q}\Phi_{c}=\mathbf{0}.
\end{equation*} 
Moreover, since $\nabla_{x_{c}}\Phi_{c}=\nabla_{q}\Phi_{c}$ and $K_{c}$ is full rank, we get that $\dot{x}_{c}=\mathbf{0}$ implies $x_{c}=\mathbf{0}$. Hence,
\begin{equation*}
 \nabla_{q}\Phi_{c}=\displaystyle\sum_{i=1}^{n}e_{i}\alpha_{c_{i}}\tanh(\beta_{c_{i}}(q-q_{*} + x_{c}))=\mathbf{0}
\end{equation*} 
implies $q=q_{*}$.
%  
% 
% $\nabla_{x_{c}}H_{d} = \nabla_{x_{c}}\Phi_{c}+ K_{c}x_{c}$, and $\nabla_{x_{c}}\Phi_{c}=\nabla_{q}\Phi_{c}$. Hence, we have the following chain of implications
% \begin{equation}
%  \begin{array}{rcl}
%   \dot{H}_{d} = 0 &\implies& \nabla_{x_{c}}H_{d} = \mathbf{0} \implies \left\lbrace\begin{array}{rcl}
% \dot{x}_{c}&=& \mathbf{0} \\[0.1cm]   \nabla_{x_{c}}\Phi_{c} + K_{c}x_{c} & =&   \mathbf{0}                                                                       
%                                                                            \end{array}\right. \\
%  & \implies & \displaystyle\sum_{i=1}^{n}\dot{q}_{i}\alpha_{c_{i}}\beta_{c_{i}}\sech^{2}(\beta_{i}( q-q_{*} + x_{c})) = \mathbf{0}    \implies \dot{q} = \mathbf{0} \\[0.1cm]
%  &\implies& p = \mathbf{0}
%   \implies  \dot{p} =-\nabla_{q}H_{d}-\mathcal{D}M^{-1}p =\mathbf{0}\\[0.1cm] &\implies& \nabla_{q}V_{d} = \displaystyle\sum_{i=1}^{n}e_{i}\alpha_{c_{i}}\tanh(\beta_{c_{i}}(\tilde q + x_{c})) = \mathbf{0}\\[0.1cm] &\implies& \left\lbrace\begin{array}{rcl}
% K_{c}x_{c}&=& \mathbf{0} \implies x_{c}= \mathbf{0}\\[0.1cm]   \tilde q & =&   \mathbf{0}  \implies q=q_{*}.                                                                      
%                                                                            \end{array}\right. 
%  \end{array}
% \end{equation} 
The proof is completed noting that $H_{d}(q,p,x_{c})$ is radially unbounded.
\end{proof}
\begin{remark}
 As a result of the comparison between the controllers \eqref{uc} and \eqref{umec}, we note that in the latter, the term $-\kappa$ is replaced with $\nabla V(q)$. 
 The physical interpretation of this is that the controller is canceling the effect of the open-loop potential energy while assigning a new potential energy function 
 with a minimum at the desired position. An example of this is the gravity compensation in robotic arms.  
\end{remark}
%%%%%%%%%%%%%%%%%%%%%%%%%%%%%%%%%%%%%%%%%%%%%%%%%
\subsubsection{Removing the steady-state error}\label{sec:rse}

Due to the complexity of their characterization, some nonlinear phenomena, e.g., static friction and asymmetry in the motors, are often neglected in the mathematical model of a mechanical system. This may affect the behavior of the closed-loop system. In particular, steady-state errors may arise. A common practice to deal with this problem is adding an integrator of the position error or a filter. However, it is necessary to ensure that the integrator--or filter--does not jeopardize the stability of the closed-loop system. Some solutions to this problem involve a change of coordinates. See, for instance, \cite{dirksz2012power,DONJUN,FERGUSONlhmnc}. However, this may lead to controllers that depend implicitly on the velocities.
% 
% 
% To this end, in \cite{dirksz2012power}  the authors propose change of coordinates
% 
% \lpbr{@@Mejorar redaccion@@} While the model \eqref{phmec}-\eqref{hm} is suitable to represent some damping in the system, it neglects certain friction phenomena, which can affect the behavior of the closed-loop system. 
% In particular, static friction may provoke that the closed-loop system exhibits steady-state errors. A common practice to deal with this problem is the addition of an integral term on the error of the position. Hence, the question that arises is under which conditions it is possible to add an integrator of the error mentioned above while preserving the stability of the desired equilibrium. In \cite{dirksz2012power}, the authors propose a solution to a similar problem \lpbr{@@Jacquelien propone poner mas referencias, pero cuales? (quiza Ferguson)@@}. However, such a solution requires a change of coordinates resulting in controllers that depend on the velocities of the system, which is not desired in this work.
Here, we provide a condition that is sufficient to ensure that the addition of the filter that deals with the steady-state error does not affect the stability properties of the closed-loop system. The stability analysis presented below is a direct application of the so-called Lyapunov's indirect method, see \cite{KHA}. While this result is only local, it provides a simple way to ensure the stability of the closed-loop system after the addition of the integrator or filter.\\
Let $\psi\in\rea^{m}$ be the state of the filter, and $\MAP{f_{\psi}}{m}{m}$, $\MAP{\Psi}{m}{m\times m}$ be differentiable functions. Consider the error $\tilde q:=q-q_{*}$, a filter with dyamics
\begin{equation}
\dot{\psi} = f_{\psi}(\psi)  + \Psi(\psi)G^{\top}\tilde{q}, \label{dpsi}
\end{equation} 
and the augmented state vector $\zeta:=(q,p,x_{c},x_{\ell},\psi)$. Hence,
\begin{equation}
 \dot{\zeta} = f_{\zeta}(\zeta,u):=\begin{bmatrix}
M^{-1}(q)p \\[0.2cm] -\nabla_{q}H(q,p)-D(q,p)M^{-1}(q)p + Gu \\[0.2cm] -R_{c}\left( K_{c}x_{c} + \nabla_{x_{c}}\Phi_{c}(z_{c})\right) \\[0.2cm] -R_{\ell}\nabla_{x_{\ell}}\Phi_{\ell}(z_{\ell})  \\[0.2cm]  f_{\psi}(\psi)  + \Psi(\psi)G^{\top}\tilde{q}                                                                                                               
\end{bmatrix} \label{augsys}
\end{equation} 
where $H(q,p)$, $\Phi_{c}(z_{c})$, and $\Phi_{\ell}(z_{\ell})$ are defined in \eqref{phmec}, \eqref{Phic}, and \eqref{Phil}, respectively. Proposition \ref{pro:int} establishes a condition under which we can ensure that the addition of the filter \eqref{dpsi} does not affect the stability of the closed-loop system.
%Proposition \ref{pro:int} provides conditions that are sufficient to ensure that the addition of an integrator---or a filter---that deals with the steady-state error does not affect the stability properties of
% the closed-loop system.\\
\begin{proposition}\label{pro:int}
 Let $\zeta_{*}=(q_{*},\mathbf{0})$ be the desired equilibrium point for \eqref{augsys}, and consider the augmented system in closed-loop with
\begin{equation}
 \begin{array}{rcl}
u&=&-\displaystyle\sum_{i=1}^{n}e_{i}\left[\alpha_{c_{i}}\tanh\left( \beta_{c_{i}}\left( \tilde{q}+x_{c} \right) \right)+\alpha_{\ell_{i}}\tanh(\beta_{\ell_{i}}z_{\ell_{i}})\right]\\[0.2cm]&& +u_{\psi}(\psi)+G^{\top}\left( \nabla_{q}V \right)_{*}, \end{array} \label{conint}
\end{equation}
yielding the closed-loop dynamics $\dot{\zeta}=f_{\zeta_{cl}}(\zeta)$. Then, $\zeta_{*}$ is a locally asymptotically stable equilibrium for the closed-loop system if the linear system
\begin{equation*}
 \dot{\zeta} = \left(\nabla f_{\zeta_{cl}}\right)_{*}\tilde\zeta
\end{equation*} 
is stable, where $\tilde\zeta:=\zeta-\zeta_{*}$.
\end{proposition}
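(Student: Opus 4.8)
The plan is to read this as a textbook application of Lyapunov's indirect method (\cite{KHA}, Theorem~4.7 / Corollary~4.3), which the surrounding text already announces. Accordingly the argument splits cleanly into two parts: (a) confirming that $\zeta_{*}$ is genuinely an equilibrium of the closed-loop vector field $f_{\zeta_{cl}}$ obtained by substituting \eqref{conint} into \eqref{augsys}; and (b) invoking the linearization theorem, whose hypothesis is precisely the assumed stability of $\dot{\tilde\zeta}=(\nabla f_{\zeta_{cl}})_{*}\tilde\zeta$. I would note at the outset that ``stable'' for the linear system must here be read as \emph{asymptotically stable}, i.e.\ that $A:=(\nabla f_{\zeta_{cl}})_{*}$ is Hurwitz; only then does the indirect method yield local asymptotic stability of the nonlinear loop (a Jacobian with eigenvalues on the imaginary axis would leave the method inconclusive).

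First I would verify $f_{\zeta_{cl}}(\zeta_{*})=\mathbf{0}$ component by component at $\zeta_{*}=(q_{*},\mathbf{0},\mathbf{0},\mathbf{0},\mathbf{0})$. The position equation gives $\dot q=M^{-1}(q_{*})p_{*}=\mathbf{0}$ since $p_{*}=\mathbf{0}$, and at $p=\mathbf{0}$ the dissipative term $\mathcal{D}M^{-1}p$ vanishes while $\nabla_{q}H=\nabla_{q}V$; hence stationarity of the momentum equation reduces to the consistency condition that the feedforward term $G^{\top}(\nabla_{q}V)_{*}$ together with the filter output $u_{\psi}(\psi_{*})$ cancel the potential force at $q_{*}$, the saturation terms in \eqref{conint} being zero because $\tilde q_{*}=\mathbf{0}$, $x_{c*}=\mathbf{0}$ and $z_{\ell*}=\mathbf{0}$. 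The controller-state equations for $x_{c}$ and $x_{\ell}$ vanish because $z_{c*}=z_{\ell*}=\mathbf{0}$ force $\nabla_{x_{c}}\Phi_{c}=\nabla_{x_{\ell}}\Phi_{\ell}=\mathbf{0}$ there, each such gradient being an odd sum of $\tanh$ terms vanishing at the origin (cf.\ \eqref{Phic}, \eqref{Phil}). Finally the filter equation requires $f_{\psi}(\psi_{*})+\Psi(\psi_{*})G^{\top}\tilde q_{*}=\mathbf{0}$, which holds since $\tilde q_{*}=\mathbf{0}$ and $\psi_{*}=\mathbf{0}$ is taken to be a rest point of $f_{\psi}$.

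Next, since every constituent map---$M^{-1}(\cdot)$, $V$, the $\ln\cosh$ and $\tanh$ terms, and the differentiable filter data $f_{\psi},\Psi,u_{\psi}$---is continuously differentiable near $\zeta_{*}$, the field $f_{\zeta_{cl}}$ is $C^{1}$ there and its Jacobian $A=(\nabla f_{\zeta_{cl}})_{*}$ is well defined. Lyapunov's indirect method then states that if $A$ is Hurwitz (equivalently, the linear system in the hypothesis is asymptotically stable) the equilibrium $\zeta_{*}$ is locally exponentially, hence asymptotically, stable for the nonlinear closed loop, which is exactly the claim. The only step demanding real care is the equilibrium verification above, because the augmented filter state and the feedforward must be mutually consistent at the desired point; everything else is a direct citation, as the spectral hypothesis on $A$ is \emph{assumed} rather than derived. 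Indeed the entire value of the proposition is to reduce the otherwise delicate stability question for the filter-augmented nonlinear loop to a routine eigenvalue check on $A$.
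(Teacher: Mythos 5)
Your proposal is correct and follows exactly the paper's own (one-line) argument: the paper simply cites Lyapunov's indirect method (Theorem~4.7 in \cite{KHA}), and you have filled in the routine details --- verifying that $\zeta_{*}$ is an equilibrium of the closed loop and that $f_{\zeta_{cl}}$ is $C^{1}$ there. Your remark that ``stable'' in the hypothesis must be read as \emph{asymptotically stable} (i.e.\ $(\nabla f_{\zeta_{cl}})_{*}$ Hurwitz) is a worthwhile clarification of the statement that the paper leaves implicit.
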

\begin{proof}
 The proof follows from Lyapunov's indirect (linearization) method. For further details see Theorem 4.7 in \cite{KHA}.
\end{proof}
At this point we make the following observations:
\begin{itemize}
 \item [\textbf{O6}] If $f_{\psi}=\mathbf{0}$ and $\Psi(\psi) = I_{m}$, then \eqref{dpsi} is an integrator. Furthermore, by fixing $u_{\psi}(\psi) = -\psi$, we obtain a classical integrator of the position error. 
 \item [\textbf{O7}] If $u_{\psi}(\psi)$ is saturated and $\left( \nabla_{q}V \right)_{*}$ is bounded, then the controller \eqref{conint} is saturated. Thus, a natural choice for $u_{\psi}$ is
 \begin{equation*}
  u_{\psi}(\psi)= -\displaystyle\sum_{i=1}^{m}e_{i}\alpha_{\psi_{i}}\tanh(\beta_{\psi_{i}}\psi_{i}),
% \label{Phipsisat}
 \end{equation*}
 where $\alpha_{\psi_{i}}$ and $\beta_{\psi_{i}}$ are positive constants.
\end{itemize}
Note that Proposition \ref{pro:int} can be straightforwardly adapted to the case when $x_{\ell}$ is not necessary, for instance, for fully actuated mechanical systems.  

% Towards this end,
% we consider the system \eqref{phmec}-\eqref{hm} in closed-loop with \lpbr{continue}
%%%%%%%%%%%%%%%%%%%%%%%%%%%%%%%%
\subsection{Electrical circuits}
%%%%%%%%%%%%%%%%%%%%
In general, the pH approach is suitable to model 
electrical circuits composed of passive components. Nevertheless, the state variables of such models are fluxes and charges, which most of the time are not measurable signals. 
A solution to this problem is to represent the behavior of the electrical networks via the BM equations, where the state variables are voltages and currents. In this section,
we study the BM equations that represent a broad class of electrical networks. Then, we provide sufficient conditions to ensure that the controllers developed in Sections \ref{sec:3} and \ref{sec:4} are suitable for stabilizing these systems.\\[0.2cm]
%%%%%%%%%%%%%%
We restrict our attention to \textit{topologically complete} networks. Accordingly, below we introduce this definition, which is taken from \cite{WEISS}. Then, we refer the reader to the mentioned reference for further discussion on topologically complete networks and related literature.

\begin{definition}
A topologically complete network of two terminal voltage-controlled and current-controlled elements has a graph that possesses a tree containing all of the capacitive branches and none of the inductive branches, each resistive tree branch corresponds to a current-controlled resistor, each resistive link corresponds to a voltage-controlled resistor, and finally, for which the location of the resistive branches are such that there exist no fundamental loop in which resistive branches appear both as tree branches and as links.
\end{definition}

Consider an electrical network consisting of $\varsigma$ linear inductors, $\varpi$ linear capacitors, and none controlled nor constant source. Then, $i_{L}: =\left[ i_{L_{1}}, \cdots, i_{L_{\varsigma}} \right]^{\top}\in\rea^{\varsigma}$
represents the currents through the inductors, and $v_{C}:= \left[ v_{C_{1}},\cdots, v_{C_{\varpi}}\right]^{\top}\in\rea^{\varpi}$ denotes the voltages across the capacitors, where $\varsigma + \varpi = n$. Hence, the electrical network can be represented by BM equations, see \cite{BRAYI,BRAYII}, as
 \begin{equation}
  \begin{array}{rcl}
   -L\dfrac{di_{L}}{dt} &=& \nabla_{i_{L}}P(i_{L},v_{C})+\tilde{g}_{L}u_{L} \\[0.3cm] 
   C\dfrac{dv_{C}}{dt} &=& \nabla_{v_{C}}P(i_{L},v_{C})+\tilde{g}_{C}u_{C}
  \end{array}\label{BMe}
 \end{equation} 
 where the positive definite matrices $L\in \rea^{\varsigma\times \varsigma}$ and $C\in\rea^{\varpi\times\varpi}$ denote the inductance and capacitance matrices, respectively, and the so-called \textit{mixed-potential} function is given by
 \begin{equation}
  P(i_{L},v_{C}):=i_{L}^{\top}\Gamma v_{C} + P_{R}(i_{L}) - P_{G}(v_{C}),
%   -P_{E}(i_{L})+P_{J}(v_{C}), 
  \label{POT}
 \end{equation}
 where:
 \begin{itemize}
  \item The matrix $\Gamma\in\rea^{\varsigma\times\varpi}$ determines the interconnection between the inductors and capacitors of the system, and all its entries are either $1$, $-1$, or zero.
%   \item The mappings $\MAP{P_{E}}{\varsigma}{}$ and $\MAP{P_{J}}{\varpi}{}$ capture the power supplied by the (current-controlled) voltage sources and the (current controlled) current sources, respectively.
  \item The mappings $\MAP{P_{R}}{\varsigma}{}$ and $\MAP{P_{G}}{\varpi}{}$ are the \textit{dissipative current-potential} and the \textit{dissipative voltage-potential}, respectively, with\footnote{For a more detailed explanation about the mixed-potential function \eqref{POT}-\eqref{DISPOT}, we refer the reader to \cite{jeltsema2003passivity,jeltsema2004tuning}.}
 \begin{equation}
  \begin{array}{rcl}
   P_{R}(i_{L})&:=&\displaystyle\int_{0}^{i_{L}}v_{R}(i_{L}')di_{L}'\\[0.4cm]
   P_{G}(v_{C})&:=&\displaystyle\int_{0}^{v_{C}}i_{G}(v_{C}')dv_{C}',
  \end{array}\label{DISPOT}
 \end{equation} 
 where $v_{R}(0)=\mathbf{0}$, $i_{G}(0)=\mathbf{0}$, and
 \begin{equation}
  \begin{array}{rclrcll}
   i_{L}^{\top}v_{R}(i_{L})&\geq& 0, & \nabla v_{R}(i_{L})&\geq& 0 & \forall \ i_{L}\in\rea^{\varsigma} \\[0.2cm]
   v_{C}^{\top}i_{G}(v_{C})&\geq& 0, & \nabla i_{G}(v_{C})&\geq& 0 & \forall \ v_{C}\in\rea^{\varpi}.
  \end{array}\label{disRLC}
 \end{equation} 
 \item The inputs $u_{L}\in\rea^{m_{\varsigma}}, \ u_{C}\in\rea^{m_{\varpi}}$, with $m_{\varsigma}+m_{\varpi}=m$, denote the external voltage sources in series with the inductors and the external current sources in parallel with the capacitors, respectively.
%  \item The mappings $\MAP{\tilde{g}_{L}}{n}{\varsigma\times m_{\varsigma}}, \ \MAP{\tilde{g}_{C}}{n}{\varpi\times m_{\varpi}}$ represent the (voltage-related) input matrix and the (current-related) input matrix, respectively. 
\item The constant matrices $\tilde{g}_{L}\in\rea^{\varsigma\times m_{\varsigma}}, \ \tilde{g}_{C}\in\rea^{\varpi\times m_{\varpi}}$ represent the \textit{voltage-related} input matrix and the \textit{current-related} input matrix, respectively.
 \end{itemize} 
 The system \eqref{BMe}-\eqref{POT} admits a more compact representation of the form
 \begin{equation}
  Q\dot{x} = \nabla P(x) + \tilde{g}u \label{BMrlc}
 \end{equation} 
 with
 \begin{equation}
  \begin{array}{rcl}
   x=\begin{bmatrix}
      i_{L} \\[0.1cm] v_{C}
     \end{bmatrix},&
Q:= \bkd{-L,C},&\tilde{g}:=\begin{bmatrix}
\tilde{g}_{L} \\[0.1cm] \tilde{g}_{C}                        
                                   \end{bmatrix}
.
  \end{array}\label{xQg}
 \end{equation}
 Note that the system \eqref{BMrlc} can be expressed as in \eqref{afsys} with
 \begin{equation}
  \begin{array}{rcl}
   f(x) = Q^{-1}\nabla P(x), && g = Q^{-1}\tilde g.  \label{fgRLC}
  \end{array}
 \end{equation} 
 Customarily, to ensure that Assumption \ref{ass1} holds, we look for an alternative pair $(\tilde{Q}(x),\tilde{P}(x))$, for a detailed discussion on this topic we refer the reader to \cite{jeltsema2005brayton} and \cite{BORSCHERCDC18}.
%  \begin{equation}
%   \begin{array}{l}
%    \rank\{\tilde Q(x)\} = n, \; \; \forall x\in\rea^{n}, \\[0.15cm]
%    Q^{-1}\nabla P(x) = \tilde Q^{-1}(x)\nabla \tilde P(x), \\[0.15cm]
%    \tilde Q(x) + \tilde Q^{\top}(x)  \leq  0.
%   \end{array} \label{paircond}
%  \end{equation} 
 In particular, for the systems under study in this section we have the following result.\\[0.1cm]
 %%%%%%%%%%%%%%%%%%%
 \begin{proposition}\label{pro:rcl}
  Suppose that the Hessian of $P(x)$ has full rank. Define
  \begin{equation}
   \begin{array}{rcl}
    \tilde{P}(x) &:=& \displaystyle\frac{1}{2}\left[ \nabla P(x) \right]^{\top}\Xi\nabla P(x) \\[0.2cm]
    \tilde{Q}(x) &:=& \nabla^{2}P(x)\Xi Q
   \end{array} \label{PQtilde}
  \end{equation} 
 \end{proposition}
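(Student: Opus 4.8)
The plan is to show that the pair $(\tilde{Q},\tilde{P})$ defined in \eqref{PQtilde} is an \emph{admissible transformation} of the Brayton--Moser system \eqref{BMrlc}, i.e.\ that \eqref{BMrlc} is equivalent to $\tilde{Q}(x)\dot{x}=\nabla\tilde{P}(x)+\nabla^{2}P(x)\Xi\tilde{g}u$, and then to read off from this representation the storage function and the dissipation term required by Assumption \ref{ass1}. The whole argument hinges on one gradient identity: for a constant symmetric $\Xi$,
\begin{equation*}
 \nabla\tilde{P}(x)=\nabla^{2}P(x)\,\Xi\,\nabla P(x).
\end{equation*}
First I would verify this by differentiating the quadratic form $\tilde{P}=\tfrac12(\nabla P)^{\top}\Xi\nabla P$ componentwise and using the symmetry of both $\Xi$ and the Hessian $\nabla^{2}P$, so that the two terms produced by the product rule coincide.

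With this identity in hand, I would left-multiply \eqref{BMrlc} by $\nabla^{2}P(x)\Xi$. By the definition of $\tilde{Q}$ the left-hand side is exactly $\tilde{Q}(x)\dot{x}$, while the first term on the right becomes $\nabla^{2}P\,\Xi\,\nabla P=\nabla\tilde{P}$, yielding the transformed representation with input matrix $\nabla^{2}P\,\Xi\,\tilde{g}$. The hypothesis that $\nabla^{2}P(x)$ has full rank, together with the invertibility of $Q$ and $\Xi$, guarantees that $\tilde{Q}(x)$ is nonsingular, so I can solve for $\dot{x}$ and cast the system in the form \eqref{afsys} with $f=\tilde{Q}^{-1}\nabla\tilde{P}$. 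A short cancellation $\tilde{Q}^{-1}\nabla\tilde{P}=(\nabla^{2}P\,\Xi\,Q)^{-1}\nabla^{2}P\,\Xi\,\nabla P=Q^{-1}\nabla P$ confirms that the drift is unchanged, as it must be for the same physical circuit, so $f$ agrees with \eqref{fgRLC}.

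It then remains to certify Assumption \ref{ass1} with $S(x)=\tilde{P}(x)$. Since $\nabla\tilde{P}=\tilde{Q}f$, one has $[\nabla\tilde{P}]^{\top}f=f^{\top}\tilde{Q}f=\tfrac12 f^{\top}(\tilde{Q}+\tilde{Q}^{\top})f$, which reduces everything to the symmetric part of $\tilde{Q}$; this is the step I expect to be the main obstacle. Here I would exploit the block structure $Q=\bkd{-L,C}$: choosing $\Xi=\bkd{L^{-1},C^{-1}}$ gives $\Xi Q=\bkd{-I_{\varsigma},I_{\varpi}}$, so that the interconnection block $\Gamma$ enters $\tilde{Q}$ skew-symmetrically and cancels in $\tilde{Q}+\tilde{Q}^{\top}$, leaving only the resistive blocks, $\tilde{Q}+\tilde{Q}^{\top}=-2\,\bkd{\nabla v_{R},\nabla i_{G}}\leq 0$ by the monotonicity conditions \eqref{disRLC}. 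Consequently $[\nabla\tilde{P}]^{\top}f=-\lVert\ell\rVert^{2}$ with $\ell(x)=\bkd{(\nabla v_{R})^{1/2},(\nabla i_{G})^{1/2}}\,Q^{-1}\nabla P(x)$, which is precisely \eqref{pascond}. The delicate points to watch are the sign bookkeeping forced by the indefinite $Q$ and verifying that the $\Gamma$-coupling is genuinely skew after the multiplication, since that cancellation is exactly what renders the transformed resistive structure sign-definite; the full-rank Hessian hypothesis serves only to keep $\tilde{Q}$ invertible throughout.
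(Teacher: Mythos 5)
Your proposal is correct and follows essentially the same route as the paper: the gradient identity $\nabla\tilde{P}=\nabla^{2}P\,\Xi\,\nabla P$, the resulting equivalence of representations, and the observation that $\Xi Q=\bkd{-I_{\varsigma},I_{\varpi}}$ makes the $\Gamma$-coupling enter $\tilde{Q}$ skew-symmetrically so that $\sym\{\tilde{Q}\}=-\bkd{\nabla v_{R},\nabla i_{G}}\leq 0$ by \eqref{disRLC}. The only cosmetic difference is that you evaluate the quadratic form at the drift $f$ to certify \eqref{pascond} with an explicit $\ell$, whereas the paper evaluates it at $\dot{x}$ to read off the dissipation inequality $\dot{\tilde{P}}\leq-\dot{x}^{\top}\tilde{Q}(x)gu$ directly; the two computations are interchangeable.
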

 where $P(x)$ and $Q$ are defined in \eqref{POT}--\eqref{DISPOT} and \eqref{xQg}, respectively, and
  \begin{equation*}
   \Xi:=\bkd{L^{-1},C^{-1}}.
  \end{equation*}
  Then, the system \eqref{BMrlc} can be rewritten as
  \begin{equation}
   \dot{x} = \tilde{Q}^{-1}(x)\nabla \tilde{P}(x) + gu. \label{RLCaff}
  \end{equation} 
  Furthermore, the map $u\mapsto -g^{\top}\tilde{Q}(x)^{-\top}\dot{x}$ is passive with storage function $\tilde{P}(x)$.
 %%%%%%
 \begin{proof}
Note that 
\begin{equation*}
 \nabla \tilde{P} = \nabla^{2}P(x)\Xi\nabla P. 
\end{equation*} 
 Hence, 
 \begin{equation*}
  Q^{-1}\nabla P(x) = \tilde Q^{-1}(x)\nabla \tilde P(x),
 \end{equation*} 
 and the expression \eqref{RLCaff} is obtained from \eqref{fgRLC}. To prove passivity note that
 \begin{equation*}
   \tilde{Q}(x) 
   =\begin{bmatrix}
    -\nabla v_{R}(x) & \Gamma \\ -\Gamma^{\top} & -\nabla i_{G}(x) 
   \end{bmatrix}.
 \end{equation*} 
 Thus, from \eqref{disRLC}, the symmetric part of $\tilde{Q}(x)$ is negative semi-definite. Hence, by premultiplying both sides of \eqref{RLCaff} by $\dot{x}^{\top}\tilde{Q}(x)$, we obtain  
  \begin{equation*}
%    \begin{array}{cl}
    \dot{x}^{\top}\tilde Q(x)\dot{x} = \dot{\tilde P} + \dot{x}^{\top}\tilde Q(x) gu  \implies \dot{\tilde P} \leq  -\dot{x}^{\top}\tilde Q(x) gu.
%    \end{array}
  \end{equation*} 
 \end{proof}
 In light of Proposition \ref{pro:rcl}, we make the following observations:
%  If it is possible to identify a new pair $(\tilde{Q}(x),\tilde{P}(x))$ that verifies the conditions given in \eqref{paircond}---for a detailed discussion on this topic we refer the reader to \cite{jeltsema2005brayton} and \cite{BORSCHERCDC18}---, we have the following:
 \begin{itemize}
  \item [\textbf{O8}] Assumption \ref{ass1} holds for $S(x) = \tilde P(x)$.
  \item [\textbf{O9}] The components of $y$ are given in terms of the dynamics of the system, which might be non-measurable signals. On the other hand, the elements of $\gamma(x)$ can be expressed in terms of voltages and currents, which are, in general, the available measurements in an RLC network.
  \item [\textbf{O10}] The asymptotic stability of the equilibrium in Propositions \ref{pro1}--\ref{pro4} is ensured if
  \begin{equation*}
   \diag\left\lbrace \nabla v_{R}(i_{L}), \nabla i_{G}(v_{C}) \right\rbrace\dot{x} = \mathbf{0} \implies x=x_{*}. 
  \end{equation*} 
 \end{itemize}
 
We conclude this section with the following remark concerning the integrability of the passive output provided in Proposition \ref{pro:rcl}.

 \begin{remark}
 As a result of the Poincar\'{e} lemma, there exists $\gamma(x)$ such that $\dot{\gamma} = y$ if 
  \begin{equation*}
   \nabla(\tilde Q(x)g(x)) = \left[ \nabla(\tilde Q(x)g(x)) \right]^{\top}. 
  \end{equation*}
 \end{remark}

\section{Examples}\label{sec:6}
%%%%%%%%%%%%%%%%%%%%%%
In this section, we illustrate the applicability of the controllers presented in Sections \ref{sec:3}, \ref{sec:4} and \ref{sec:5} through the stabilization of three systems in different physical domains. To this end, we present simulations and experimental results derived from the implementation of the mentioned controllers.
%%%%%%%%%%%%%%%%%%%%%%%%%%%
\subsection{Electromechanical (translational) coupling device}
\begin{figure}[h]
 \centering
 \includegraphics[width=0.8\textwidth]{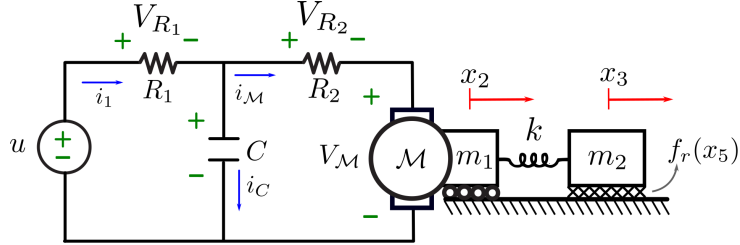}
 % emactuator.eps: 0x0 px, 300dpi, 0.00x0.00 cm, bb=
 \caption{Electromechanical coupling device.}
 \label{fig:coupling}
\end{figure}

Consider the coupling device depicted in Fig. \ref{fig:coupling}, where $u$ is the voltage provided by the source, $R_{1}$ and $R_{2}$ denote linear resistors, $C$ represents a linear capacitor, the electrical part of the system is \textit{coupled} with the mechanical one via the motor $\mathcal{M}$, the symbol $k$ represents a linear spring, $x_{1}$ is the charge across the capacitor, $x_{2}$ and $x_{3}$ are the positions of the masses, $x_{4}$ and $x_{5}$ are the momenta. The term $f_{r}(x_{5})$ is an approximation of the friction force present in the second mass, which is given by
\begin{equation*}
 f_{r}(x_{5}) = \frac{a_{1}}{m_{2}}x_{5} + a_{2}\tanh(a_{3}x_{5}),
\end{equation*} 
where $a_{1}$, $a_{2}$, and $a_{3}$ are positive constant parameters. Hence, the dynamics of this system can be represented as in \eqref{afsys}, with $g=e_{1}\frac{1}{R_{1}}$ and
\begin{equation*}
  f(x) = \begin{bmatrix}
          -\displaystyle\left( \frac{1}{R_{1}}+\frac{1}{R_{2}} \right)\frac{1}{C}x_{1} + \frac{1}{a_{0}m_{1}R_{2}}x_{4} \\[0.4cm]
          \displaystyle\frac{1}{m_{1}}x_{4} \\[0.4cm] \displaystyle\frac{1}{m_{2}}x_{5} \\[0.4cm] -k\left( x_{2} - x_{3} \right) + \displaystyle\frac{1}{a_{0}R_{2}}\left( \frac{1}{C}x_{1}- \frac{1}{a_{0}m_{1}}x_{4}\right) \\[0.4cm]  k\left( x_{2} - x_{3} \right) - \displaystyle\frac{a_{1}}{m_{2}}x_{5}-a_{2}\tanh(a_{3}x_{5})
         \end{bmatrix}, 
\end{equation*} 
where $a_{0}$ is a positive constant parameter that characterizes the relation between the electrical and mechanical variables of the motor.\\
The set of assignable equilibria for this system is given by
\begin{equation*}
 \mathcal{E} = \left\lbrace x\in\rea^{5}\mid x_{1}=x_{4}=x_{5}=0, x_{2}=x_{3} \right\rbrace,
\end{equation*} 
and the control objective is to stabilize the mass $m_{2}$ at the desired point $x_{3_{*}}$ while considering that the voltage source has a limited operation range, and there are no sensors to measure velocities. To this end, consider the total energy of the system, given by,
\begin{equation*}
 S(x) = \displaystyle\frac{1}{2C}x_{1}^{2} + \displaystyle\frac{1}{2}k\left( x_{2}-x_{3} \right)^{2}  + \displaystyle\frac{1}{2m_{1}}x_{4}^{2} + \displaystyle\frac{1}{2m_{2}}x_{5}^{2}. 
\end{equation*} 
Then, some simple computations show that 
\begin{equation*}
\begin{array}{rcl}
 \left[ \nabla S(x) \right]^{\top}f(x) &=& -\displaystyle\frac{1}{C^{2}R_{1}}x_{1}^{2}-\displaystyle\frac{1}{R_{2}}\left( \frac{1}{C}x_{1} - \frac{1}{a_{0}m_{1}}x_{4} \right)^{2} - \dfrac{a_{1}}{m_{2}^{2}}x_{5}^{2} - \dfrac{a_{2}}{m_{2}}x_{5}\tanh(a_{3}x_{5}).
 \end{array}
\end{equation*} 
Accordingly, Assumption \ref{ass1} is satisfied. Moreover, 
\begin{equation}
 \begin{array}{rcl}
\dot{S} &=& -\displaystyle\frac{R_{1}R_{2}}{R_{1}+R_{2}}\dot{x}_{1}^{2}-\displaystyle\frac{1}{a_{0}^{2}\left( R_{1}+R_{2} \right)}\dot{x}_{2}^{2} - \left[a_{1}\dot{x}_{3}+a_{2}\tanh(m_{2}a_{3}\dot{x}_{3})\right]\dot{x}_{3}+y^{\top}u \leq y^{\top}u \end{array} \label{dSe1}
\end{equation} 
with
\begin{equation*}
 y = \displaystyle\frac{R_{2}}{R_{1}+R_{2}}\dot{x}_{1} + \displaystyle\frac{1}{a_{0}\left(R_{1} + R_{2}\right)}\dot{x}_{2}.
\end{equation*} 
Therefore, 
\begin{equation}
 \gamma(x) = \displaystyle\frac{R_{2}}{R_{1}+R_{2}}x_{1} + \displaystyle\frac{1}{a_{0}\left(R_{1} + R_{2}\right)}x_{2} \label{gamex1}
\end{equation} 
satisfies $\dot{\gamma} = y$. Furthermore, Assumption \ref{ass2} holds for $\gamma(x)$ given in \eqref{gamex1}, $\kappa = 0$, and any $\alpha,\beta>0$. Note that, given \eqref{dSe1}, Assumption \ref{ass3} is satisfied for $\eta=x_{3}$, where $\Lambda_{c}$ depends on the value of the resistors. For this example, we have $\Upsilon=1$. Hence, from Proposition \ref{pro4}, it follows that the controller \eqref{ulc} ensures that the closed-loop system has a stable equilibrium at $(0,x_{2_{*}},x_{3_{*}}, 0, 0)$, with $x_{2_{*}}=x_{3_{*}}$. We remark that, since $\gamma(x_{1},x_{2})$ and $\eta=x_{3}$, the control law does not depend on $x_{4}$ and $x_{5}$, which are the states related to the velocities of the masses.

To prove asymptotic stability of the equilibrium, we check if \eqref{detectlc} holds. For this example, we have the following chain of implications
\begin{equation}
 \begin{array}{rcccccl}
  &\dot{\eta} = 0 &\iff& \dot{x}_{3} = 0 &\iff& x_{5} = 0 \\
  \implies & \dot{x}_{5}=0 &\iff & x_{2} = x_{3} &\implies& \dot{x}_{2} = \dot{x}_{3} \\
  \implies& \dot{x}_{2} = 0 &\iff& x_{4} = 0 &\implies& \dot{x}_{4}=0 \\\iff& x_{1} = 0 &\implies& \dot{x}_{1} = 0 &\implies& u = 0.
 \end{array} \label{implex1}
\end{equation} 
On the other hand, $\dot{x}_{\ell} = 0$ implies that $z_{\ell} = 0$, which combined with \eqref{implex1} leads to the conclusion $z_{c} = 0$. Therefore,
\begin{equation*}
 \begin{array}{cl}
 & \dot{x}_{c}=0 \iff x_{c} = 0 \implies {\gamma}(x) = \gamma_{*} \iff x_{2} = x_{2_{*}} \\ \implies& x_{3} = x_{3_{*}}.
 \end{array}
\end{equation*} 
Consequently, \eqref{detectlc} holds, and the equilibrium point is asymptotically stable. Furthermore, in this case, $S_{d_{\ell}}(x)$ is \textit{radially unbounded}. Thus, the equilibrium point is globally asymptotically stable.
%%%%%%%%%%%%%%%%%%%%%%%%%%%
\subsubsection*{Simulations}
%%%%%%%%%%%%%%%%%%%%%%%%%%%%
\begin{table}[h]
 \caption{Parameters of the electromechanical coupling device}
 \label{tab:parex1}
 \begin{center}
 \renewcommand{\arraystretch}{1.2}
\begin{tabular}{cc|cc}\hline
Parameter & Value & Parameter & Value \\\hline
$R_{1}$ & $100$ & $R_{2}$ & $100$ \\  $C$ & $2.2\times 10^{-4}$ & $m_{1}$ & $0.01$ \\ $m_{2}$ & $0.015$ &
$a_{0}$ & $0.005$ \\ $a_{1}$ & $6\times 10^{-4}$ & $a_{2}$ & $8\times 10^{-5}$ \\ $a_{3}$ & $40$ & $k$ & $0.3$\\\hline
\end{tabular}
\end{center}
\end{table}
To corroborate the effectiveness of the saturated controller, we perform simulations considering the parameters provided in Table \ref{tab:parex1}, where we are particularly interested in showing that the control signal is saturated and the influence of the term $\nabla_{z_{\ell}}\Phi_{\ell}(z_{\ell})$ on the performance of the closed-loop system. To this end, we consider  $0.0 25[m]$ as the desired displacement for the masses, i.e., $x_{2_{*}}=x_{3_{*}}=0.025$, and the control parameters
\begin{equation}
 \begin{array}{lll}
K_{c} = 10^{6}, & R_{c} =0.3 , & \beta_{c} =450, \\
K_{\ell} = 5.5\times 10^{-4}, & R_{\ell}=33 , & \beta_{\ell}=2\times 10^{6}.
\end{array}\label{cpex1}
\end{equation}

To illustrate how the term $\nabla_{z_{\ell}}\Phi_{\ell}(z_{\ell})$ affects the closed-loop behavior, we consider that the voltage source operates in the range of $\pm 5[V]$.
Fig. \ref{fig:alzero} shows the results of simulating two different scenarios: (i) $\alpha_{c} = 5, \ \alpha_{\ell}= 0$, which is plotted in blue, and (ii) $\alpha_{c} = 2.5, \ \alpha_{\ell}= 2.5$ plotted in orange. In both cases the initial conditions are $\mathbf{0}$. From the plots, we observe that in the scenario (i), the first mass converges towards the desired position without oscillations. In contrast, the second mass exhibits an oscillatory behavior as the natural damping is relatively small. On the other hand, in the second scenario, it is evident that the term $\nabla_{z_{\ell}}\Phi_{\ell}(z_{\ell})$ injects damping to the second mass, reducing notoriously the oscillations in $x_{3}$.

\begin{figure}[h]
 \centering
 \includegraphics[width=\textwidth]{./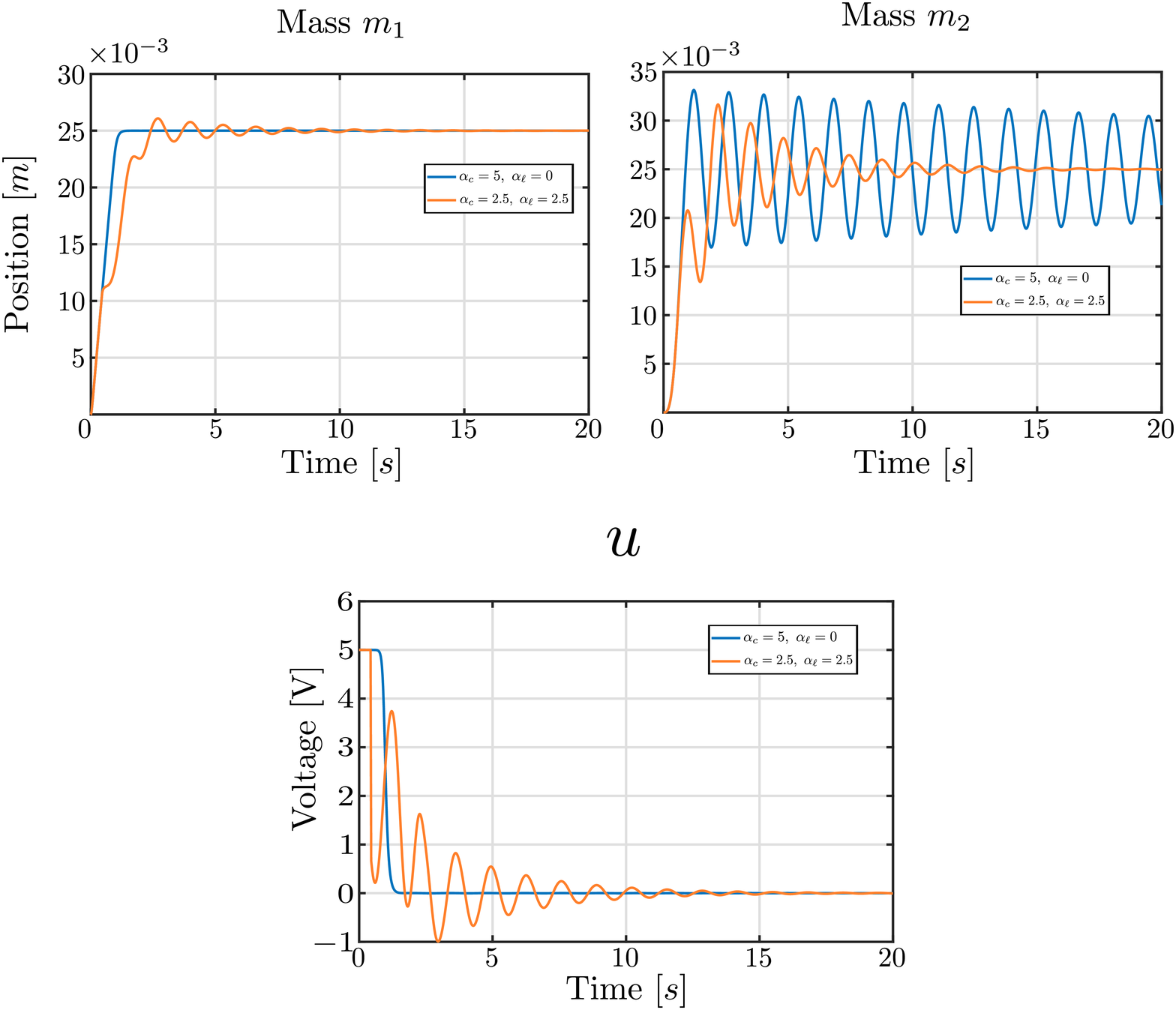}
 % alpha0.eps: 0x0 pixel, 300dpi, 0.00x0.00 cm, bb=0 -1 2024 500
 \caption{Evolution of the masses and the control law with and without damping injection through the term $\nabla_{z_{\ell}}\Phi_{\ell}(z_{\ell})$.}
 \label{fig:alzero}
\end{figure}

To show the saturation of the controller, we consider the control parameters \eqref{cpex1} and three different set of values for $\alpha_{c}$ and $\alpha_{\ell}$, namely,
\begin{equation*}
 \begin{array}{rclcrcl}
  \alpha_{c}&=&2.5,& \alpha_{\ell}&=&2.5, \\
  \alpha_{c}&=&3.75,& \alpha_{\ell}&=&3.75, \\
  \alpha_{c}&=&5,& \alpha_{\ell}&=&5.
 \end{array}
\end{equation*} 
Accordingly, the corresponding control laws must saturate at $\pm 5, \ \pm 7.5$, and $\pm 10$, respectively. In all cases, we consider initial conditions equal to zero. The simulation results are depicted in Figs. \ref{fig:masses} and \ref{fig:u2sec}. In the former, we note that the behavior of the masses is not drastically affected by the saturation limits. On the other hand, the saturation of the control signals is appreciated in Fig. \ref{fig:u2sec}, where the plot at the right-hand shows the first two seconds of simulation when the saturation takes place. 

\begin{figure}[h]
 \centering
 \includegraphics[width=\textwidth]{./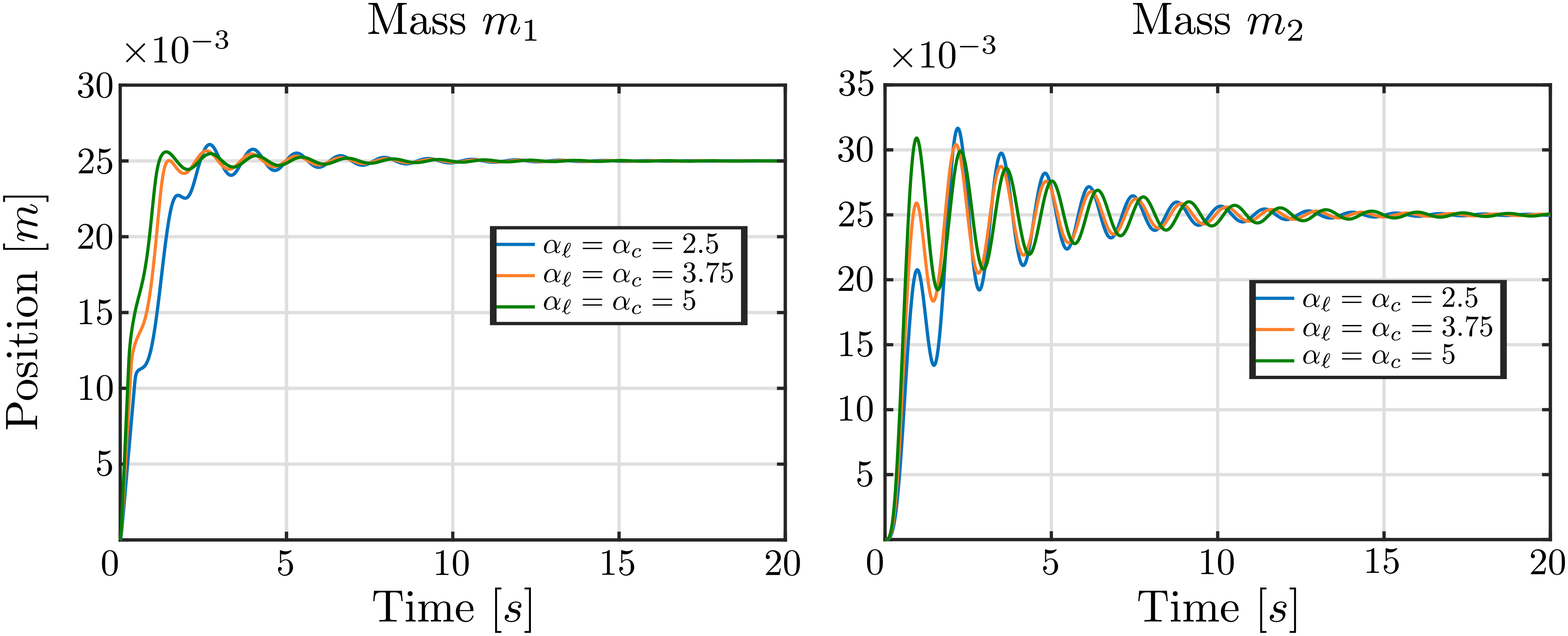}
 % masses3.eps: 0x0 pixel, 300dpi, 0.00x0.00 cm, bb=0 -1 1264 492
 \caption{Behavior of the masses for different saturation limits.}
 \label{fig:masses}
\end{figure}
\begin{figure}[h]
 \centering
 \includegraphics[width=\textwidth]{./}
 % 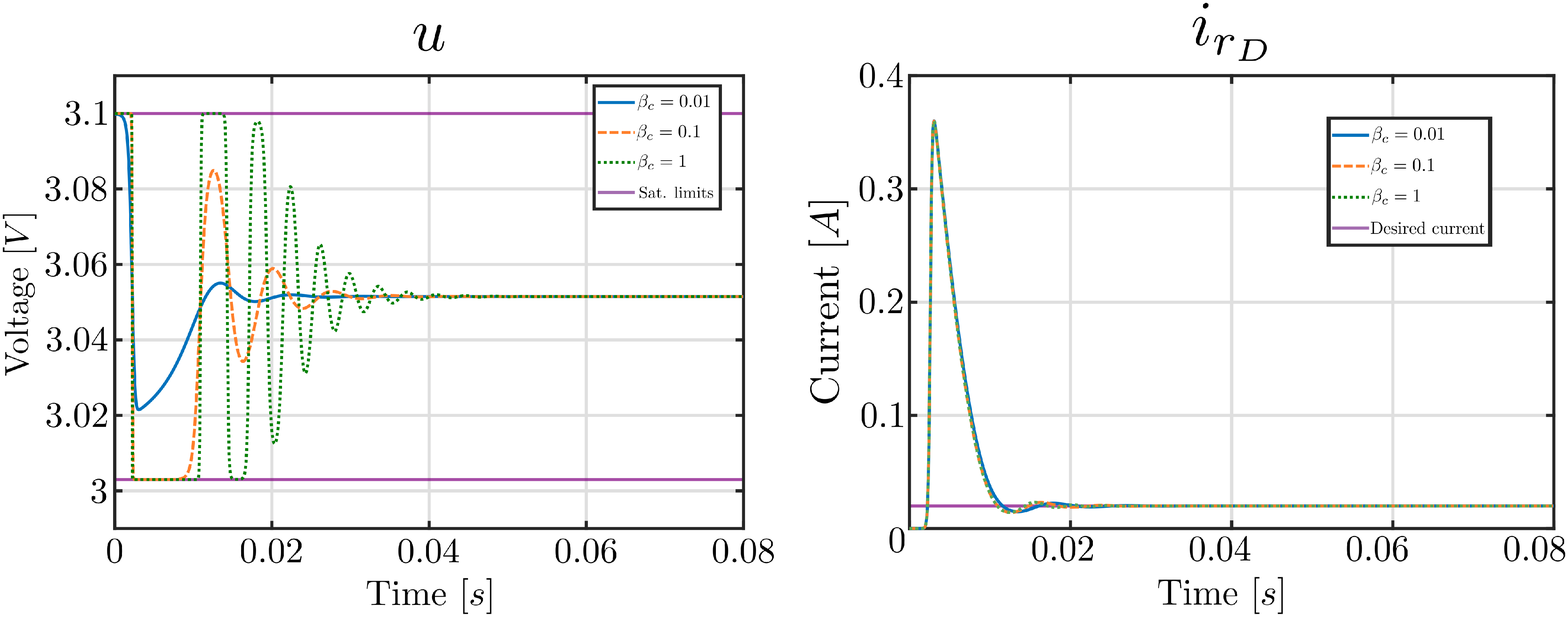: 0x0 pixel, 300dpi, 0.00x0.00 cm, bb=0 -1 1265 496
 \caption{Control signals evolution during the 20 seconds of simulation (left-hand), and only the first 2 seconds (right-hand). The saturation limits for each control signal are plotted with a dotted line.}
 \label{fig:u2sec}
\end{figure}

\subsection{Nonlinear RLC circuit}
%%%%%%%%%%%%%%%%%%%%%%%%
\begin{figure}[h]
 \centering
 \includegraphics[width=.8\textwidth]{./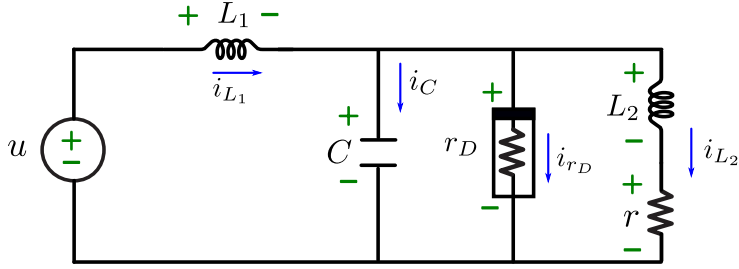}
 % circuit.eps: 0x0 pixel, 300dpi, 0.00x0.00 cm, bb=
 \caption{Nonlinear RLC circuit.}
 \label{fig:circuit}
\end{figure}
Consider the circuit depicted in Fig. \ref{fig:circuit}, which admits a representation of the form \eqref{BMrlc} with $\tilde{g}_{C}=0$, $\tilde{g}_{L}=-e_{1}$, and
\begin{equation}
\begin{array}{rclrcl}
 Q&=&\diag\{-L_{1},-L_{2},C \}, & \Gamma &=& \begin{bmatrix}
                                        1 \\ -1
                                       \end{bmatrix}, \\[0.4cm]
i_{r_{D}}(x)&=& -a\left( e^{\frac{x_{3}}{b}} - 1 \right), & v_{R}(x)&=&\begin{bmatrix}
             0 \\ rx_{2}
            \end{bmatrix}, \\[0.4cm] 
%             \tilde{g}_{C}&=&0, &
% \tilde{g}_{L}&=&\begin{bmatrix}
%                  -1 \\ 0
%                 \end{bmatrix},
\end{array} \label{cir} 
\end{equation}
where $x_{1},x_{2}$ denote the currents through the inductors, $x_{3}$ represents the voltage across the capacitor, the constant parameters $L_{1},L_{2},C$ denote the inductances and the capacitance, respectively, $a$ and $b$ are positive constant parameters, $r_{D}$ is a nonlinear load, and $r$ denotes the resistance of the linear resistor.\\[0.2cm]
The control objective is to regulate the current through $r_{D}$ at the desired value while keeping the supplied voltage bounded. Moreover, we consider that the only measure available is the voltage across the capacitor. To solve this problem, we first define the set of assignable equilibria for this system, which is given by
\begin{equation}
 \mathcal{E}=\left\lbrace x\in\rea^{3}\mid x_{1} = \frac{1}{r}x_{3}+a\left( e^{\frac{x_{3}}{b}} - 1 \right), x_{2} = \frac{1}{r}x_{3}\right\rbrace. \label{eqex2}
\end{equation} 
According to Proposition \ref{pro:rcl}, this system can be represented as in \eqref{RLCaff} with $g=e_{1}\frac{1}{L_{1}}$ and
\begin{equation*}
 \begin{array}{rcl}
  \tilde{Q}(x) &=& \begin{bmatrix}
     0 & 0 & 1 \\ 0 & -r & -1 \\ -1 & 1 & -\frac{a}{b}e^{\frac{x_{3}}{b}}             
                 \end{bmatrix}, \\[0.2cm]
\tilde{P}(x) &=& \frac{1}{2L_{1}}x_{3}^{2}+\frac{1}{2L_{2}}\left( rx_{2}-x_{3} \right)^{2} + \frac{1}{2C}\left[x_{1}-x_{3}+i_{r_{D}}(x) \right]. 
 \end{array}
\end{equation*} 
Moreover, a passive output for this system is given by
\begin{equation*}
 y = \frac{1}{L_{1}}\dot{x}_{3},
\end{equation*} 
with storage function $S(x)=\tilde{P}(x)$. Hence, Assumption \ref{ass1} is satisfied. Furthermore, $\gamma(x)$ can be chosen as
\begin{equation*}
 \gamma(x)=\frac{1}{L_{1}}x_{3},
\end{equation*} 
and some simple computations show that Assumption \ref{ass2} holds for every $\alpha, \ \beta>0$, and $\kappa = -x_{3_{*}}$. Accordingly, from Proposition \ref{pro2} it follows that the control law \eqref{uc} renders stable the equilibrium point $(x_{*},0)$, where $x_{*}\in\mathcal{E}$. Notice that, since $\gamma(x)$ depends exlcusively on $x_{3}$, the controller only requires to measure the voltage across the capacitor.

% , the controller 
% \begin{equation}
%  u = -\kappa - \alpha\tanh(\beta z_{c}),
% \end{equation} 
% with $\kappa$ given in \eqref{kaprlc} and
%  \begin{equation}
%  z_{c} :=  \frac{1}{L_{1}}\left( x_{3}-x_{3_{*}} \right) + x_{c},
% \end{equation}
% ensures that $(x_{*},0)$ is a stable equilibrium for the closed-loop system, with Lyapunov function
% \begin{equation}
%  S_{d_{c}}(x) = \tilde{P}(x) + \displaystyle\frac{\alpha}{\beta}\ln(\cosh(\beta z_{c}))+\frac{1}{2}K_{c}x_{c}^{2}.
% \end{equation} 
To prove the asymptotic stability of the equilibrium, note that
\begin{equation*}
 -\lVert \ell(x) + w(x)u \rVert = \dot{x}^{\top}\tilde{Q}(x)\dot{x} = -r\dot{x}_{2}^{2} - \frac{a}{b}e^{\frac{x_{3}}{b}}\dot{x}_{3}^{2}.
\end{equation*} 
Hence, we have the following chain of implications
\begin{equation}
 \begin{array}{rcl}
\dot{S}_{d_{c}}=0&\iff& \left\lbrace\begin{array}{rcl}
                                      \dot{x}_{3}=0\\
                                      \dot{x}_{2}=0
                                     \end{array}\right\rbrace\implies \dot{x}_{1}=0\\ &\implies& x_{3}-x_{3_{*}}=-\alpha_{c}\tanh(\beta_{c}z_{c}).                                                                                                                                              \end{array}
                                     \label{detecex21}
\end{equation} 
On the other hand,
\begin{equation}
\dot{x}_{c} = 0 \iff K_{c}x_{c} = -\alpha_{c}\tanh(\beta_{c}z_{c}). \label{detecex22}  
 \end{equation} 
 Therefore, combining \eqref{detecex21} and \eqref{detecex22}, we conclude that $K_{c}x_{c} = x_{3}-x_{3_{*}}$. Accordingly, we have
 \begin{equation*}
  x_{c} = \displaystyle\frac{L_{1}}{K_{c}+L_{1}}z_{c}.
 \end{equation*} 
 Then, \eqref{detecex22} can be rewritten as
 \begin{equation*}
  \dot{x}_{c} = 0 \iff \displaystyle\frac{K_{c}L_{1}}{K_{c}+L_{1}}z_{c} = -\alpha_{c}\tanh(\beta_{c}z_{c}),
 \end{equation*}
 which holds only if $z_{c}=x_{c}=0$. Moreover, from \eqref{detecex21}, we conclude that
 \begin{equation*}
  x_{3} = x_{3_{*}}\implies\left\lbrace \begin{array}{rcl}
                                         x_{1}&=& x_{1_{*}} \\ x_{2}&=& x_{2_{*}}. 
                                        \end{array} \right.
 \end{equation*} 
%%%%%%%%%%%%%%%%%%%%%%%%%%%
\subsubsection*{Simulations}
%%%%%%%%%%%%%%%%%%%%%%%%%%%%
\begin{table}[h]
 \caption{Parameters of the nonlinear RLC circuit}
 \label{tab:parex2}
 \begin{center}
 \renewcommand{\arraystretch}{1.2}
\begin{tabular}{cc|cc}\hline
Parameter & Value & Parameter & Value\\\hline 
$r$ & $100$ & $L_{1}$ & $0.01$ \\ $L_{2}$ & $0.02$  &
$C$ & $2\times 10^{-4}$ \\  $a$ & $10^{-7}$ & $b$ & $0.25$ \\\hline
\end{tabular}
\end{center}
\end{table}

To corroborate the results exposed above, we consider the parameters given in Table \ref{tab:parex2} and the following scenario: the voltage source must operate in the range of $0$ to $3.1$ volts, and the load demands a current of $20[mA]$. Then, $u_{*}=x_{3_{*}}=3.0515$. Thus, the load drives the voltage source near to its operation limit. Accordingly, we need to ensure that the control signal saturates to protect the source. To this end, we consider the control parameters $K_{c}=10$, $R_{c}=10$, and $\alpha_{c}=0.0485$. Note that the selected value for $\alpha_{c}$ ensures that the control signal saturates at $3.003$ and $3.1$ volts. Fig. \ref{fig:uid} depicts the simulation results considering initial conditions zero and different values for $\beta_{c}$. We observe that larger values for $\beta_{c}$ provoke that the control signal reaches the saturation limits more times since the control law becomes more sensitive to errors between the actual current and the desired one.

\begin{figure}[h]
 \centering
 \includegraphics[width=\textwidth]{}
 % u3yu2sec.eps: 0x0 pixel, 300dpi, 0.00x0.00 cm, bb=0 1 1266 503
 \caption{Control signals for different values of $\beta_{c}$ (left-hand), and current through the load (right-hand).}
 \label{fig:uid}
\end{figure}
%%%%%%%%%%%%%%%%%%%%%%%%
\subsection{Robotic arm}
%%%%%%%%%%%%%%%%%%%%%%%%%%%
\begin{figure}[h]
 \centering
 \includegraphics[width=0.6\textwidth]{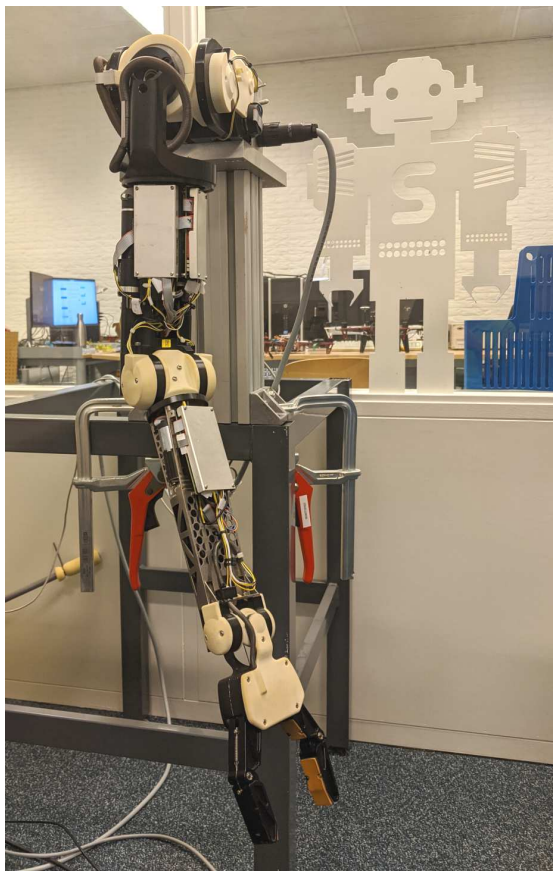}
 % PERA.eps: 0x0 px, 300dpi, 0.00x0.00 cm, bb=
 \caption{PERA system.}
 \label{fig:PERA}
\end{figure}

Consider the Philips Experimental Robot Arm (PERA) shown in Fig. \ref{fig:PERA}, a robotic arm designed to mimic the human arm motion \cite{rijs2010philips}. To illustrate the applicability of the results reported in Section \ref{sec:3} and \ref{sec:5}, we carry out experiments with the PERA system considering only three degrees-of-freedom, namely, the shoulder roll $q_{1}$, the elbow pitch $q_{2}$, and the elbow roll $q_{3}$. Hence, the system admits a representation of the form \eqref{phmec} with 
$G=I_{3}$, $\mathcal{D} = \mathbf{0}$, and 
\begin{equation*}
 \begin{array}{rcl}
      M(q) &=& \begin{bmatrix} m_{1}(q_{2}) & 0 & \mathcal{I}_{3} \cos(q_{2})\\
    0 & \mathcal{I}_{2} + \mathcal{I}_{3} + m_{3} d_{c2}^{2} & 0\\
    \mathcal{I}_{3} \cos(q_{2}) & 0 & \mathcal{I}_{3}\end{bmatrix} \\[0.6cm]
    m_{1}(q_{2})&:=& \displaystyle\sum_{i=1}^{3}\mathcal{I}_{i} + m_{3} d_{c2}^{2} \sin^{2}(q_{2}),\\
    V(q) &=&  m_{3}g_{r}d_{c_{2}} \left(1  -  \cos(q_{2})\right) ,
 \end{array}
\end{equation*} 
where, $\mathcal{I}_{1}, \ \mathcal{I}_{2}$, and $\mathcal{I}_{3}$ denote the moments of inertia, $m_{3}$ and $d_{c2}$ are the mass and the distance to the center of mass of the second link, respectively, and $g_{r}$ represents the gravitational acceleration. The parameters of the system are provided in Table \ref{modelparam}.
\begin{table}[ht]
	\caption{Model parameters}\label{modelparam}
	\centering
	\renewcommand{\arraystretch}{1.2}
	\begin{tabular}{cc}
		\hline
		Parameter & Value \\ \hline
		$g_r$& 9.81       \\
		$d_{c2}$&  0.16     \\
		$m_3$&   1    \\
		$I_1$&   0.0054    \\
		$I_2$&   0.0768    \\
		$I_3$&   0.00211    \\ \hline
	\end{tabular}
\end{table}

In this system, only the shoulder roll is controlled directly by the rotation of a motor and the rest of the angles are controlled via differential drives. Therefore, the elbow pitch $q_2$ and the elbow roll $q_3$ are controlled by two motors, i.e., each angle is controlled by $u_{2}+u_{3}$ and $u_{2}-u_{3}$, respectively.

The control objective is to stabilize the system at a desired configuration $q_{*}$ while ensuring that 
\begin{equation*}
 \begin{array}{rcccl}
  \lvert u_{1} \rvert \leq 17.1007, &&
  \lvert u_2+u_3 \rvert \leq 7.901, &&
  \lvert u_2-u_3 \rvert \leq 7.901
 \end{array}
\end{equation*} 
to protect the motors. Additionally, since the robotic arm is not equipped with velocity sensors, we can measure only the positions.

\subsubsection{Implementation of the control law \eqref{umec}}\label{exp:1}

Following the results of Proposition \ref{pro:fully}, the saturated control law \eqref{umec} renders ${(q_{*},\mathbf{0},\mathbf{0})}$ globally asymptotically stable. To corroborate the effectiveness of the control approach, we perform an experiment under the initial conditions ${q(0) =(-2.257,-0.206,0.044)}$, where the PERA is stabilized at the desired configuration $q_{*} = (-1.81,\pi/2,0.78)$ with the control parameters given in Table \ref{controlparamns}. The results of the experiment are depicted in Figs. \ref{trajns} and \ref{inputns}. In the latter, we observe the saturation of $u_{1}$. On the other hand, in Fig. \ref{trajns}, we note steady-state errors in $q_{1}$ and $q_{3}$. These errors may be caused by several factors, such as the neglected damping, the asymmetry of the motors, or their dead zones. Hence, to remove these errors, we implement an integral-like term as it is explained in Section \ref{sec:rse}.

\begin{table}[ht!]
	\caption{Parameters for control law \eqref{umec}}\label{controlparamns}
	\centering
	\renewcommand{\arraystretch}{1.4}
	\begin{tabular}{cc}
		\hline
		Parameter & Value \\ \hline
		$\alpha$&   $(17,3,3.3)^\top$    \\
		$\beta$&     $(80,100,80)^\top$  \\
		$K_c$&   $\diag\{1,1,1\}$    \\
		$R_c$&   $\diag\{0.1,0.005,0.05\}$    \\ \hline
	\end{tabular}
\end{table}

\begin{figure}[ht!]
	\centering
	\includegraphics[width=\textwidth]{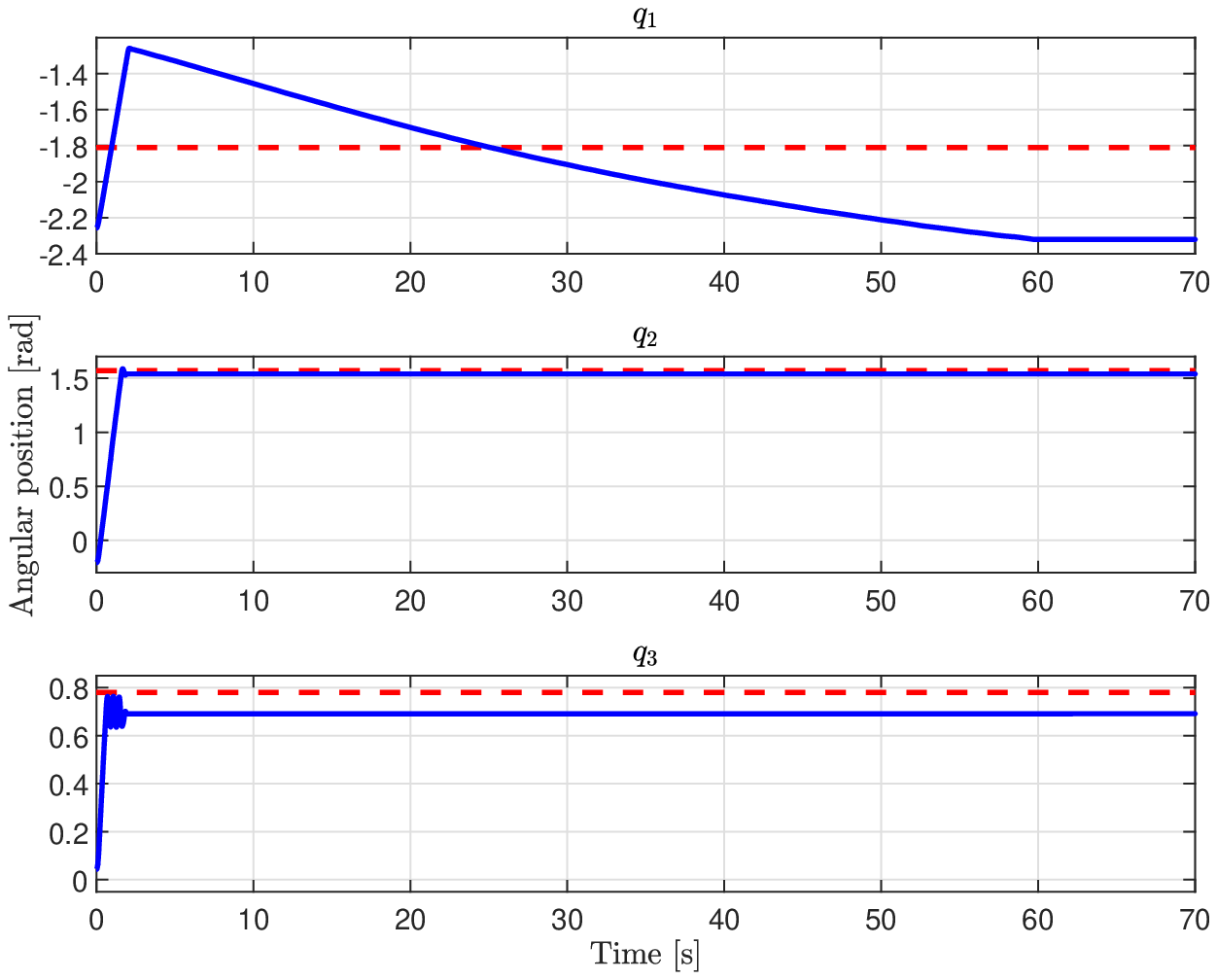}
	\caption{Angular positions (blue) and desired configurations (red).}\label{trajns}
\end{figure}
\begin{figure}[ht!]
	\centering
	\includegraphics[width=\textwidth]{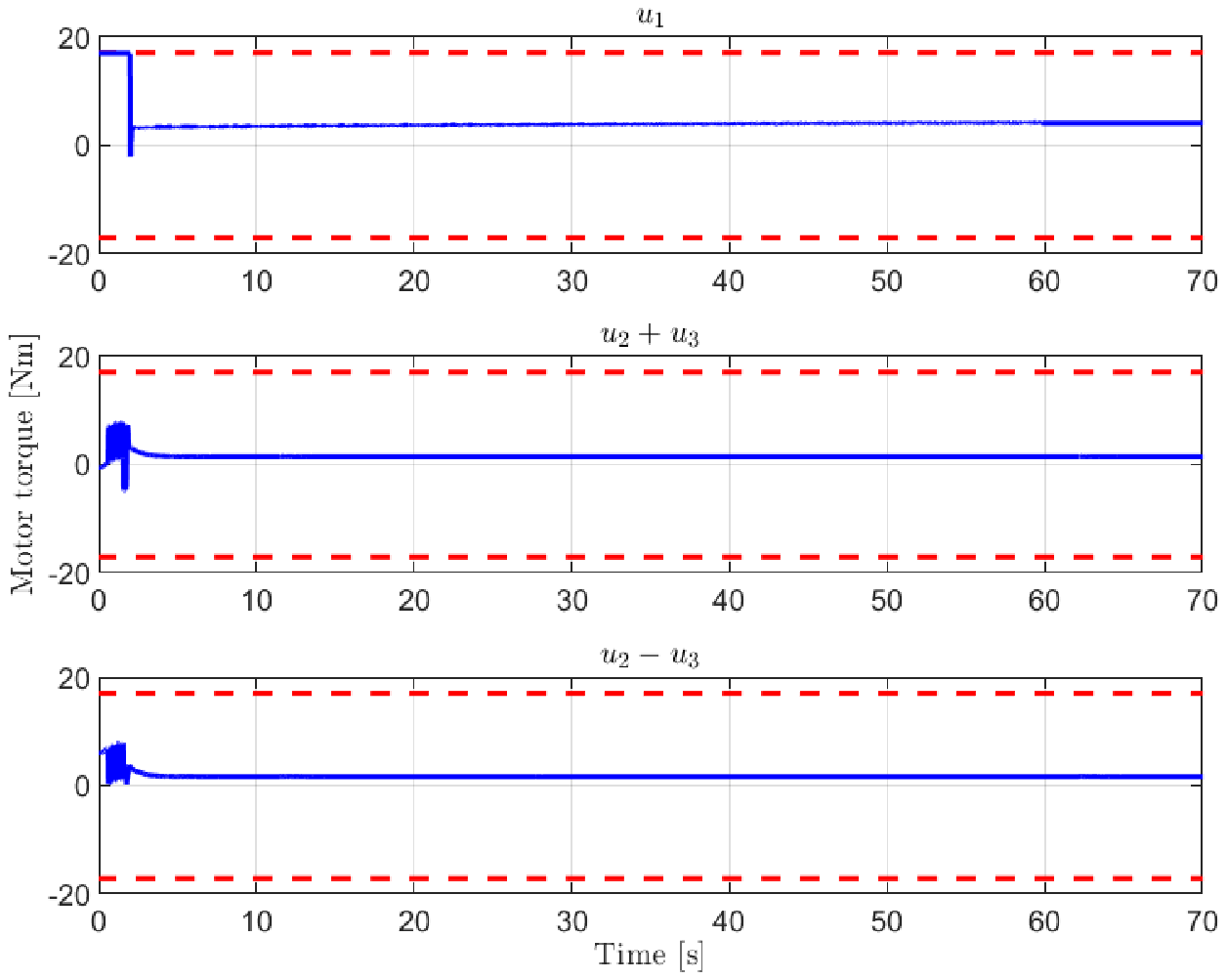}
	\caption{Torques provided by the motors (blue) and motors safety limits (red).}\label{inputns}
\end{figure}

\subsubsection{Implementation of the control law \eqref{conint}}\label{exp:2}

To remove the steady-state error from the results of Section \ref{exp:1}, we implement a filter of the form \eqref{dpsi} with 
\begin{equation*}
 \begin{array}{rcl}
  f_{\psi}(\psi) &=& -R_{\psi}\psi, \\ \Psi(\psi) &=& \displaystyle\sum_{i=1}^{3}e_{i}e_{i}^{\top}\alpha_{\psi_{i}}\beta_{\psi_{i}}\sech\left( \beta_{\psi_{i}}\psi_{i} \right), \\ u_{\psi_{i}}(\psi) &=& - \displaystyle\sum_{i=1}^{3}e_{i}\alpha_{\psi_{i}}\tanh\left( \beta_{\psi_{i}}\psi_{i} \right),
 \end{array}
\end{equation*} 
where $R_{\psi}$ is a diagonal matrix with positive entries. Accordingly,
\begin{equation}
 \left(\nabla f_{\zeta_{cl}}\right)_{*} = \begin{bmatrix}
     \mathbf{0} & M^{-1}_{*} & \mathbf{0} & \mathbf{0} \\
     -D_{c} & \mathbf{0} & -D_{c} & -D_{\psi} \\
     -R_{c}D_{c} & \mathbf{0} & -R_{c}K_{c}-R_{c}D_{c} & \mathbf{0} \\
     D_{\psi} & \mathbf{0} & \mathbf{0} & -R_{\psi}
                                                             \end{bmatrix}, \label{linPERA}
\end{equation} 
where
\begin{equation*}
 \begin{array}{rcl}
  D_{c}&:=&\diag\{\alpha_{c_{1}}\beta_{c_{1}},\alpha_{c_{2}}\beta_{c_{2}},\alpha_{c_{3}}\beta_{c_{3}} \} \\
  D_{\psi}&:=&\diag\{\alpha_{\psi_{1}}\beta_{\psi_{1}},\alpha_{\psi_{2}}\beta_{\psi_{2}},\alpha_{\psi_{3}}\beta_{\psi_{3}} \}.
 \end{array}
\end{equation*} 
Then, according to Proposition \ref{pro:int}, the augmented system has a globally asymptotically equilibrium at $(q_{*},\mathbf{0},\mathbf{0},\mathbf{0})$ if the control parameters of \eqref{conint} are selected such that all the eigenvalues of the matrix at the right-hand of \eqref{linPERA} have real part negative. To satisfy this condition, we propose the control parameters provided in Table \ref{controlparam}.

\begin{table}[ht!]
	\caption{Parameters for the control law \eqref{conint}}\label{controlparam}
	\centering
	\renewcommand{\arraystretch}{1.4}
	\begin{tabular}{cc}
		\hline
		Parameter & Value \\  \hline 
		$\alpha_c$&   $(6,1.4,1)^\top$    \\
		$\beta_c$&     $(120,120,120)^\top$  \\
		$\alpha_{\psi}$&     $(11,1.5,2.4)^\top$  \\
		$\beta_{\psi}$&  $(7,7,7)^\top$     \\
		$K_c$&   $\diag\{1,1,1\}$    \\
		$R_c$&   $\diag\{0.1,0.005,0.05\}$    \\
		$R_{\psi}$& $\diag\{1,1,35\}$			\\ \hline
	\end{tabular}
\end{table}

To corroborate that the steady-state errors are removed, we carry out experiments under initial conditions $q(0) =(-2.23,-0.212,0.086)$, considering the same desired configuration as in Section \ref{exp:1}. The results are shown in Figs. \ref{traj} and \ref{input}. We remark the absence of steady-state errors in the trajectories depicted in Fig. \ref{traj}, where the improvement with respect to the results of \ref{exp:1} is particularly notorious in $q_{1}$. Moreover, the saturation of $u_{1}$ is evident in Fig. \ref{input}. 
%Although the control task is satisfactorily achieved, the transient response --- see the trajectory of $q_{2}$ --- exhibits an oscillatory behavior that may be caused due to the mentioned unmodeled phenomena. 
The video of this experiment can be watched at \url{https://www.youtube.com/watch?v=l-9DbTZvyD0}.

\begin{figure}[ht!]
	\centering
	\includegraphics[width=\textwidth]{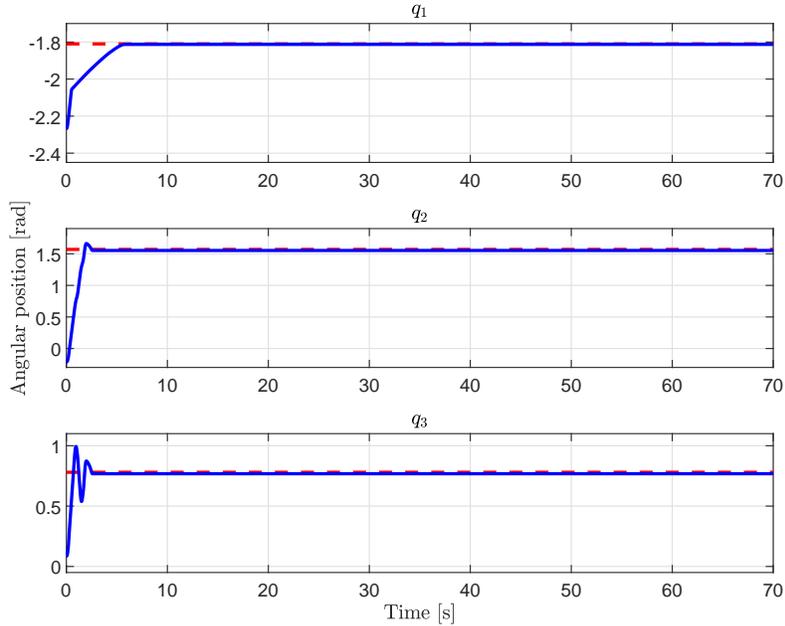}
	\caption{Angular positions (blue) and desired configurations (red).}\label{traj}
\end{figure}
\begin{figure}[ht!]
	\centering
	\includegraphics[width=\textwidth]{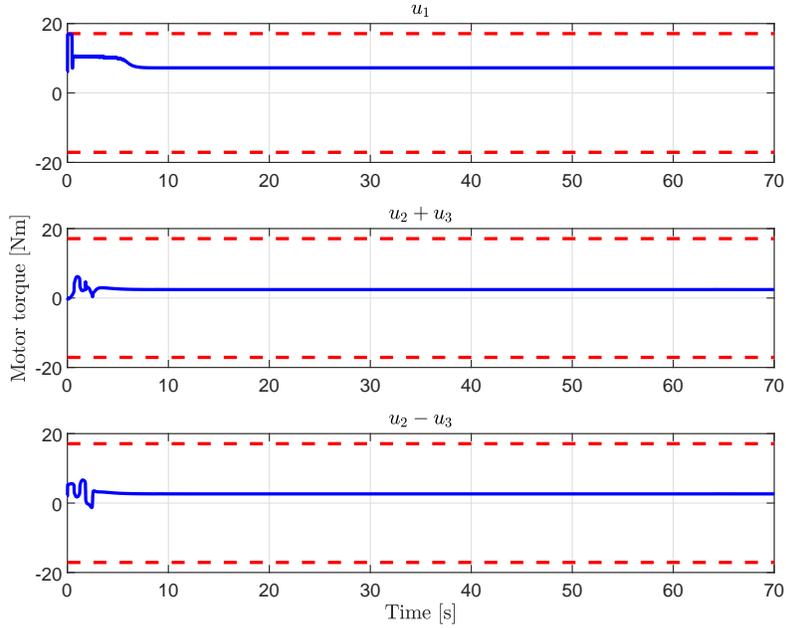}
	\caption{Torques provided by the motors (blue) and motors safety limits (red).}\label{input}
\end{figure}

\section{Concluding remarks and future work}\label{sec:7}
We have presented a PBC approach to design saturated controllers suitable for stabilizing a broad class of physical systems characterized by Assumptions \ref{ass1} and \ref{ass2}. Moreover, the proposed controllers do not require measuring the passive output to inject damping into the closed-loop system. Additionally, we have introduced a method to exploit the natural dissipation of the system to improve the performance of the controllers for systems with poor damping propagation. We have illustrated the applicability of the technique by controlling three systems in different physical domains, where the efectiveness of the methodology has been validated through simulations and experiments\\[0.2cm]
As future work, we aim to propose a constructive approach to tune the gains of the controllers to guarantee appropriate performance of the closed-loop system. 

\section*{Acknowledgements}

Pablo Borja and Jacquelien M.A. Scherpen thank Floris van den Bos for the fruitful discussions on the control of the PERA system.

\bibliography{bibliographyifac}
\bibliographystyle{plain}

\end{document}